\newcommand{\disc}[1]{\textsf{Prob}(#1)}
\newcommand{\discs}[1]{\textsf{Prob$_S$}(#1)}
\newcommand{\discr}{\textsf{Prob}}
\newcommand{\Capp}[1]{\textsf{Cap}(#1)}
\newcommand{\X}{{\mathcal X}}
\newcommand{\I}{{\mathcal I}}
\newcommand{\Y}{{\mathcal Y}}
\newcommand{\K}{\mathcal{K}}
\newcommand{\M}{{\mathcal M}}
\renewcommand{\Re}{{\mathbb R}} 
\newcommand{\da}{\mathop{\downarrow}}
\newcommand{\ua}{\mathop{\uparrow}}
\newcommand{\Nat}{{\mathbb N}}
\newcommand{\ST}{\textsf{ST}}
\newcommand{\DT}{\textsf{DT}}
\newcommand{\Aff}{\textsf{Aff}}
\newcommand\conv{\overline{\mbox{\rm conv}}} 
\def\defi{\buildrel\rm def 
\over=}                    
\begin{document}
\title{Probabilistic Monads, Domains and Classical Information}
\author{Michael Mislove\footnote{The support of the US Office of Naval Research is gratefully acknowledged.}
\institute{Tulane University\\New Orleans, LA 70118}
}

\maketitle

\begin{abstract}
Shannon's classical information theory \cite{shannon} uses probability theory to analyze channels as mechanisms for  information flow. In this paper, we generalize results from \cite{martin2} for binary channels to show how some more modern tools --- probabilistic monads and domain theory in particular --- can be used to model classical channels. As initiated in \cite{martin2}, the point of departure is to consider the family of channels with fixed inputs and outputs, rather than trying to analyze channels one at a time. The results show that domain theory has a role to play in the capacity of channels; in particular, the $n\times n$-stochastic matrices, which are the classical channels having the same sized input as output, admit a quotient compact ordered space which is a domain, and the capacity map factors through this quotient via a Scott-continuous map that measures the quotient domain.  We also comment on how some of our results relate to recent discoveries about quantum channels and free affine monoids. 
\end{abstract}

\section{Introduction}
Classical information theory has its foundations in the seminal work of Claude Shannon~\cite{shannon}, who first conceived of analyzing the behavior of channels using \emph{entropy} and deriving a formula for \emph{channel capacity} based on mutual information (cf.~\cite{cover} for a modern presentation of the basic results). Recent work of Martin, et al.~\cite{martin2} reveals that the theory of compact, affine monoids and domain theory can be used to analyze the family of binary  channels. In this paper, our goal is to generalize the results in \cite{martin2} to the case of $n\times n$-channels --- channels that have $n$ input ports and $n$ output ports. Our approach also uses the monadic properties of probability distributions to give an abstract presentation of how channels arise, and that clarifies the role of the doubly stochastic matrices, which are special channels.   While our work focuses on the classical case, the situation around quantum information and quantum channels is also a concern, and we point out how our results relate to some recent work \cite{crowd,MartinFeng} on quantum qubit channels and free affine monoids. While most of the ingredients we piece together are not new, we believe the approach we present does represent a new way in which to understand families of channels and some of their important features. 

The rest of the paper is structured as follows. In the next section, we describe three monads based on the probability measures over compact spaces, compact monoids and compact groups. Each of these is used to present some aspect of the classical channels. We then introduce topology, and show how the capacity of a channel can be viewed from a topological perspective. The main result here is that capacity is the maximum distance from the surface determined by the entropy function and the underlying polytope generated by the rows of a channel matrix, viewed as vectors in $\Re^n$ for appropriate $n$. This leads to a generalization of Jensen's Lemma that characterizes strictly concave functions. Domain theory is then introduced, as applied to the finitely-generated polytopes residing in a compact convex set, ordered by reverse inclusion. Here we characterize when proper maps measure a domain, in the sense of Martin~\cite{martin:thesis}; a closely related result can be found in \cite{panangad}. Finally, we return to the compact monoid of $n\times n$-stochastic matrices and show that it has a natural, algebraically-defined pre-order relative to which capacity measures the quotient partial order, which is a compact ordered space. The capacity mapping is also shown to be strictly monotone with respect to this pre-order, which means that strictly smaller channel matrices have strictly smaller capacity. We close with a summary and comments about future work. 

\section{Three probabilistic monads}
The categorical presentation of classical information relies on three monads, each of which has the family $\disc{X}$ of probability distributions over a set $X$ as the object-level of the left adjoint. The first of these starts with compact Hausdorff spaces, and uses several results from functional analysis: standard references for this material are \cite{bour,rudin}. We present these monads in turn:
\subsection{A spatial monad}
We begin with the probability measure monad over topological spaces. If $X$ is a compact Hausdorff space, then $C(X,\Re)$, the family of continuous, real-valued functions defined on $X$, is a  Banach space (complete, normed linear space) in the $\sup$-norm. 
The Banach space dual of $C(X,\Re)$, denoted $C(X,\Re)^*$ consists of all continuous linear functionals from $C(X,\Re)$ into $\Re$. $C(X,\Re)^*$ is another Banach space, and the Riesz Representation Theorem implies this is the Banach space of Radon measures on $X$ (those that are both inner- and outer regular). The unit sphere of $C(X,\Re)^*$ is the family $\disc{X}$ of probability measures over $X$. If we endow $\disc{X}$ with the weak$^*$ topology (the weakest topology making all continuous linear functionals into $\Re$ continuous), then $\disc{X}$ becomes a compact, Hausdorff space, by the Banach-Alaoglu Theorem. $\discr$ extends to a functor $\discr_S\colon \textsf{Comp}\to \textsf{CompConv}_{LC}$ from  the category of compact Hausdorff spaces and continuous maps, to the category of compact, convex,  locally convex spaces  and continuous affine maps, via $\discs{X} = \disc{X}$ and $f\colon X\to Y$ maps to $\discs{f}\colon \discs{X}\to\discs{Y}$ by $\discs{f}(\mu)(A) = \mu(f^{-1}(A))$, for each Borel set $A\subseteq Y$.

 Moreover, if the mapping $x\mapsto \delta_x\colon  X\to C(X,\Re)^*$ sending a point to the Dirac measure it defines, is a continuous mapping into the weak$^*$ topology. Since $X$ is compact Hausdorff, Urysohn's Lemma implies $C(X,\Re)$ separates the points of $X$, and so $x\mapsto \delta_x$ is a homeomorphism onto its image. Another application of Urysohn's Lemma shows each Dirac measure is an extreme point $\disc{X}$ and in fact the Dirac measures form the set of extreme points of $\disc{X}$.

A \emph{simple measure} is a finite, convex combination of Dirac measures, i.e., one of the form $\sum_{i\leq n} r_i \delta_{x_i}$, where $r_i\geq 0$, $\sum_i r_i = 1$, and $x_i\in X$ for each $i$. We let $\textsf{Prob}_{sim}(X)$ denote this family. 
 The Krein-Milman Theorem implies that $\textsf{Prob}_{sim}(X)$ is weak$^*$ dense among the probability measures. So, if $f\colon X\to C$ is a continuous function from $X$ into a compact subset of a locally convex vector space, then the function $\widehat{f}(\delta_x) = f(x)$ extends uniquely to continuous function $\widehat{f}(\sum_{i\leq n} r_i \delta_{x_i}) = \sum_{i\leq n} r_if(x_i)$, and then to all of $\disc{X}$, by the density of the simple measures. Obviously, $\widehat{f}(\delta_x) = f(x)$. 
 
We conclude that the functor $\discr_S$ is left adjoint to the forgetful functor. In fact, \discr$_S$ defines a monad,  where the unit of the adjunction is the mapping $\eta_X(x) = \delta_x$ and the multiplication $\mu\colon \disc{\disc{X}} \to \disc{X}$ is integration. 

\begin{theorem}
The functor $\discr_S$ sending a compact space to its family of probability measure in the weak$^*$ topology defines a monad on the category \textsf{Comp}. The unit of the monad sends a point $x\in X$ to the Dirac measure $\delta_x$, and the image of the unit is the set of extreme points in $\disc{X}$.
\end{theorem}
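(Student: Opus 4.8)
The plan is to assemble the ingredients developed above into the theorem's three assertions: that $\discr_S$, composed with the forgetful functor, is an endofunctor on \textsf{Comp}; that it carries a monad structure with unit $\eta_X(x)=\delta_x$; and that the Dirac measures are exactly the extreme points of $\disc{X}$. For the first point I would confirm that $\discr_S$ lands in compact Hausdorff spaces: by Banach--Alaoglu the unit ball of $C(X,\Re)^*$ is weak$^*$-compact, and $\disc{X}$ --- the positive norm-one functionals, equivalently by Riesz the Radon probability measures on $X$ --- is a weak$^*$-closed convex subset of it, hence compact Hausdorff. For continuous $f\colon X\to Y$ the pushforward $\discs{f}(\mu)(A)=\mu(f^{-1}(A))$ is again a Radon probability measure, and since $\int g\,d(\discs{f}(\mu))=\int (g\circ f)\,d\mu$ for all $g\in C(Y,\Re)$, the map $\discs{f}$ is affine and weak$^*$-continuous, with functoriality immediate from this change-of-variables identity.

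Next I would obtain the monad from an adjunction rather than check the monad laws by hand. The candidate unit $\eta_X(x)=\delta_x$ is weak$^*$-continuous, and by Urysohn's Lemma $C(X,\Re)$ separates the points of $X$, so $\eta_X$ is a topological embedding. Given a continuous $f\colon X\to C$ into a compact convex subset $C$ of a locally convex space, put $\widehat f=\mathrm{bar}\circ\discs{f}\colon\disc{X}\to C$, where $\mathrm{bar}\colon\disc{C}\to C$ is the barycenter map; this exists and is weak$^*$-continuous because every Radon probability measure on a compact convex subset of a locally convex space has a unique barycenter. Then $\widehat f(\delta_x)=\mathrm{bar}(\delta_{f(x)})=f(x)$, and affineness of $\mathrm{bar}$ yields $\widehat f(\sum_i r_i\delta_{x_i})=\sum_i r_i f(x_i)$ on simple measures; any two continuous affine extensions of $f$ agree on the weak$^*$-dense set $\textsf{Prob}_{sim}(X)$ (dense by Krein--Milman) and hence everywhere, so the extension is unique. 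This universal property exhibits $\discr_S$ as left adjoint to the forgetful functor, whence $\discr_S$ is a monad with unit $\eta_X$ and multiplication $\disc{\disc{X}}\to\disc{X}$ given by integration; the monad identities then hold automatically, being those of the monad of an adjunction.

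For the image of the unit I would prove both inclusions between the Dirac measures and the extreme points of $\disc{X}$. If $\delta_x=\tfrac12(\nu_1+\nu_2)$ with $\nu_1,\nu_2\in\disc{X}$, then both $\nu_i$ vanish on every Borel set avoiding $x$, so $\nu_i=\delta_x$; hence each $\delta_x$ is extreme. Conversely, $\eta_X(X)$ is weak$^*$-closed, being the continuous image of the compact space $X$, and $\conv(\eta_X(X))=\disc{X}$ because $\textsf{Prob}_{sim}(X)$ is dense in $\disc{X}$ and contained in that closed convex hull. Milman's partial converse to the Krein--Milman Theorem then forces every extreme point of $\disc{X}$ to lie in $\overline{\eta_X(X)}=\eta_X(X)$. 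The two inclusions together show that the extreme points of $\disc{X}$ are precisely $\eta_X(X)$, the image of the unit.

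The step I expect to be genuinely load-bearing is the existence and weak$^*$-continuity of the barycenter map $\mathrm{bar}\colon\disc{C}\to C$ for $C$ compact convex in a locally convex space; once that is in hand, the universal property, the adjunction, and hence the monad laws are formal, and the extreme-point statement is a direct appeal to the Krein--Milman and Milman theorems as arranged above. Everything else --- well-definedness of the pushforward, continuity of $\eta_X$, and the change-of-variables identity --- is routine bookkeeping with the Riesz Representation Theorem and Urysohn's Lemma, tools the excerpt has already put in place.
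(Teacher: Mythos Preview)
Your proposal is correct and follows essentially the same route as the paper's discussion preceding the theorem: Banach--Alaoglu for compactness, the universal property of $\disc{X}$ (continuous maps into compact convex sets extend uniquely to continuous affine maps) to exhibit $\discr_S$ as a left adjoint and hence a monad, and Krein--Milman together with the density of simple measures for the extreme-point identification. Your explicit use of the barycenter map for the extension and of Milman's converse for the inclusion of extreme points in $\eta_X(X)$ makes rigorous what the paper leaves as brief assertions, but the overall architecture is identical.
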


\begin{definition} Let $X$ and $Y$ be compact Hausdorff spaces. A \emph{(lossless) noisy channel from $X$ to $Y$} is a mapping $f\colon X\to \disc{Y}$. 
\end{definition}
Since \discr$_S$ is a monad, each channel $f\colon X\to \disc{Y}$ corresponds uniquely to a continuous, affine mapping $\discs{f}\colon\disc{X}\to \disc{Y}$ in the Kleisli category $\K_{\discr_S}$ of \discr$_S$. 
\begin{example}
Let $n \geq 1$ and let $\overline{n} = \{0,\ldots,n-1\}$ be the discrete, compact space. Then $\disc{\overline{n}}$ is the family of probability distributions on $n$ points, and given $m\geq 1$, a channel $f\colon \overline{m}\to\disc{\overline{n}}$ is an $m\times n$-stochastic matrix. 
The family $\ST(m,n)$ of $m\times n$-stochastic matrices is then the family of lossless, noisy channels from $\overline{m}$ to $\overline{n}$. 
Moreover, from our comment about the Kleisli category $\K_{\discr_S}$, we conclude that the family of morphisms $\K_{\discr_S}(\overline{m},\disc{\overline{n}})$ is $\ST(m,n)  \hookrightarrow \Aff(\disc{\overline{m}},\disc{\overline{n}})$, where $\Aff(\disc{\overline{m}},\disc{\overline{n}})$ is the family of continuous affine maps from $\disc{\overline{m}}$ to $\disc{\overline{n}}$.   
\end{example}
This first probabilistic monad shows that classical channels correspond to mappings in the Kleisli category of the ``spatial" monad \discr$_S$ on the category \textsf{Comp}. If we let $m = n$, then $\ST(n)\ {\buildrel \mathrm{def}\over =}\ \ST(n,n)$ is also a monoid 
using composition in the Kleisli category: if $f,g\in\ST(n)$, and $\widehat{g}\colon \disc{\overline{n}}\to \disc{\overline{n}}$ is the extension of $g$, then $g\circ f ::= \widehat{g}\circ f\in\ST(n)$.
We next present a second monad that gives another account of this special case.

\subsection{A monad on monoids}
The second monad we define is based on the category \textsf{CMon} of compact monoids and compact monoid homomorphisms. More precisely, a \emph{compact monoid} is a monoid $S$ --- a non-empty set endowed with an associative binary operation $(x,y)\mapsto xy\colon S\times S \to S$ that also has an identity element, $1_S$ ---  that also is a compact Hausdorff space for which the multiplication is continuous.  We can apply the probability functor to such an $S$ to obtain the compact convex (Hausdorff) space $\disc{S}$ of probability measures on $S$. If we denote multiplication on $S$ by $\cdot$, then $\disc{\cdot}\colon \disc{S\times S}\to \disc{S}$, and since $\iota_S\colon \disc{S}\times \disc{S}\hookrightarrow \disc{S\times S}$ is an embedding, we have a continuous affine map $\disc{\cdot}\circ \iota_S\colon \disc{S}\times \disc{S}\to \disc{S}$. This map is called \emph{convolution}, and we denote $(\disc{\cdot}\circ\iota_S)(\mu,\nu) = \mu\ast_S\nu$. It is routine to show convolution is associative, so $\disc{S}$ is a compact affine monoid. 

If $\phi\colon S\to T$ is a morphism of compact semigroups, 
then $\discr(\phi)\colon\discr(S)\to\discr(T)$ is defined by $\discr(\phi)(\mu)(f) = \int f\circ \phi\, d\mu$ for any $f\colon T\to \Re$. 
If $\mu,\nu\in\discr(S)$ 
and $f\in C(T,\Re)$, then
\begin{eqnarray*}
\discr(\mu\ast_S\nu)(f) = \int_S (f\circ\phi)\, d(\mu\ast_S\nu) &=& \int_S \int_S f\circ\phi\circ m_S\, d\mu d\nu \\
&{\buildrel 1\over =}& \int_S \int_S f\circ m_T\circ (\phi\times \phi)\, d\mu d\nu\\ 
&=& (\discr(\phi)(\mu)\ast_T\discr(\phi)(\nu))(f),
\end{eqnarray*}
where $m_S\colon S\times S\to S$, $m_T\colon T\times T\to T$ are the semigroup operations, $\ast_S, \ast_T$ denote convolution, and where ${\buildrel 1\over =}$ follows from the fact that $\phi$ is a homomorphism.  Thus $\discr_S\colon (\discr(S),\ast_S)\to (\discr(T),\ast_T)$ is a semigroup homomorphism. Finally, the fact that $\discr(\phi)$ preserves the identity follows from the observation that $\delta_x\ast_S\delta_y = \delta_{xy}$, which implies that $\delta_1$ is an identity for the simple measures, and consequently for all measures since the simple measures are dense. 

It follows that restricting $\discr_S$ to the subcategory \textsf{CMon} of \textsf{Comp} yields a functor $\discr_M\colon \textsf{CMon}\to \textsf{CAM}$ into the category of  locally convex compact affine monoids and continuous affine monoid maps. 
\begin{theorem}\label{thm:monoid}
The restriction of \discr$_S$ to \textsf{Mon} induces a monad \discr$_M$ whose target is \textsf{CAM}, the category of locally convex compact affine monoids and continuous, affine monoid homomorphisms. The unit of the monad is again the Dirac map, and its image is again the set of extreme points of $\disc{S}$.
\end{theorem}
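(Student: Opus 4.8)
The plan is to obtain $\discr_M$ as the left adjoint of the forgetful functor $U\colon\textsf{CAM}\to\textsf{CMon}$ and then read the monad, and its laws, off the adjunction, exactly as was done for $\discr_S$. Functoriality is essentially already established: the discussion preceding the theorem shows that $(\disc{S},\ast_S)$ is a locally convex compact affine monoid with identity $\delta_{1_S}$, and that $\discr(\phi)\colon(\disc{S},\ast_S)\to(\disc{T},\ast_T)$ is a continuous affine monoid homomorphism whenever $\phi$ is a compact monoid homomorphism; hence $\discr_M\colon\textsf{CMon}\to\textsf{CAM}$ is a functor. It remains to exhibit the unit, verify the universal property, and identify the image of the unit.

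For the unit take $\eta_S\colon S\to U\discr_M(S)=\disc{S}$, $x\mapsto\delta_x$; it is continuous by the spatial case, it is multiplicative because $\delta_x\ast_S\delta_y=\delta_{xy}$, and it preserves identities because $\delta_{1_S}$ is the identity of $(\disc{S},\ast_S)$, so $\eta_S$ is a morphism of $\textsf{CMon}$. For the universal property, let $M\in\textsf{CAM}$ and let $f\colon S\to U(M)$ be a compact monoid homomorphism. Regarding $M$ as a compact convex subset of a locally convex space, the spatial adjunction already yields a \emph{unique} continuous affine map $\widehat f\colon\disc{S}\to M$ with $\widehat f(\delta_x)=f(x)$, namely $\widehat f(\sum_i r_i\delta_{x_i})=\sum_i r_i f(x_i)$ on simple measures, extended by weak$^*$ density. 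The only new point is that $\widehat f$ is a monoid homomorphism. The maps $(\mu,\nu)\mapsto\widehat f(\mu\ast_S\nu)$ and $(\mu,\nu)\mapsto\widehat f(\mu)\cdot\widehat f(\nu)$ are both continuous $\disc{S}\times\disc{S}\to M$ (by continuity of convolution, of $\widehat f$, and of multiplication in $M$), and they agree on the dense subset $\textsf{Prob}_{sim}(S)\times\textsf{Prob}_{sim}(S)$: for $\mu=\sum_i r_i\delta_{x_i}$ and $\nu=\sum_j s_j\delta_{y_j}$ we have $\mu\ast_S\nu=\sum_{i,j}r_i s_j\,\delta_{x_iy_j}$, hence
\[\widehat f(\mu\ast_S\nu)=\sum_{i,j}r_is_j f(x_iy_j)=\sum_{i,j}r_is_j f(x_i)\cdot f(y_j)=\Big(\sum_i r_if(x_i)\Big)\cdot\Big(\sum_j s_jf(y_j)\Big)=\widehat f(\mu)\cdot\widehat f(\nu),\]
using that $f$ is a homomorphism and that multiplication in $M$ is bi-affine; also $\widehat f(\delta_{1_S})=f(1_S)=1_M$. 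So $\widehat f$ is a $\textsf{CAM}$-morphism, and it is unique already among continuous affine maps, hence a fortiori among $\textsf{CAM}$-morphisms. This gives the adjunction $\discr_M\dashv U$.

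The induced monad on $\textsf{CMon}$ then has underlying functor $U\discr_M$, unit $\eta_S$ as above, and multiplication $\mu_S\colon\disc{\disc{S}}\to\disc{S}$ the barycentre (integration) map; the latter is the counit component at $\discr_M(S)$, i.e.\ the extension $\widehat{\mathrm{id}_{\disc{S}}}$ constructed above, and is therefore a $\textsf{CAM}$-morphism, so its $U$-image is a $\textsf{CMon}$-morphism. The monad laws are then automatic, being those of an adjunction-induced monad; concretely they reduce to the identities already checked for $\discr_S$, since all the underlying maps are the spatial ones. Finally the image of $\eta_S$ is the set of Dirac measures, which by the spatial discussion (Urysohn's Lemma together with Krein--Milman) is exactly the set of extreme points of $\disc{S}$ --- the monoid structure is irrelevant here. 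The single genuinely new step is the homomorphism property of $\widehat f$ (equivalently, of the counit, equivalently of $\mu_S$), and the only obstacle there is the passage from simple to arbitrary measures, which is handled by the joint continuity of convolution and of multiplication in $M$ together with the weak$^*$ density of the simple measures.
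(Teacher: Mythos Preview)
Your proposal is correct and follows the same line the paper takes: the paper does not supply a separate proof environment for this theorem but relies on the preceding paragraphs (functoriality of $\discr_M$, the identity $\delta_x\ast_S\delta_y=\delta_{xy}$, density of simple measures) together with the spatial case already established, and you have simply made those implicit steps explicit. The one point you articulate that the paper leaves entirely to the reader is the verification that the affine extension $\widehat f$ is multiplicative, via bi-affinity of the multiplication in $M$ and density of $\textsf{Prob}_{sim}(S)$; this is the right argument and the only genuinely new ingredient beyond the spatial adjunction.
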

Applying the same reasoning as for \discr$_S$, we see that each continuous monoid homomorphism $\phi\colon S\to \disc{T}$ corresponds to a unique morphism of compact affine monoids, $\widehat{\phi}\colon\disc{S}\to \disc{T}$ in the Kleisli category $\K_{\discr_M}$. However, in relation to classical channels, our interest is in the object level of \discr$_M$:
\begin{example}
We return to the example \ST$(n)$ of stochastic $n\times n$-matrices. These arise as channels on a discrete, $n$-element set. For such a set $\overline{n}$, the selfmaps of $\overline{n}$ form a finite --- hence compact --- monoid. If we denote this monoid by $[\overline{n}\to \overline{n}]$, then applying $\discr_M$ we obtain a compact affine monoid $\discr_M([\overline{n}\to\overline{n}])$. 

Now, $[\overline{n}\to\overline{n}]\hookrightarrow [\overline{n}\to \disc{\overline{n}}]$ by $f\mapsto \eta_{\overline{n}}\circ f$, where $\eta_{\overline{n}}$ is the unit for \discr$_S$. Since $\ST(n) = [\overline{n}\to\disc{\overline{n}}]$ is a compact affine monoid, this mapping extends to a morphism of compact affine monoids 
$$\sum_{i\leq k} r_i \delta_{f_i}\mapsto \sum_{i\leq k} r_i \eta_{\overline{n}}\circ f_i\colon \discr_M([\overline{n},\overline{n}])\to \ST(n).$$
Since $\{ \eta_{\overline{n}}\circ f\mid f\in [\overline{n}\to\overline{n}] \}$ is the set of extreme points of $\ST(n)$, this morphism is surjective. In fact this map is an isomorphism. Thus $\ST(n)$ is the free compact affine monoid over $[\overline{n} \to \overline{n}]$.
\end{example}

\subsection{A monad over compact groups}
Our final use of \discr\ to define a monad starts with \textsf{CGrp}, the category of compact groups and continuous group homomorphisms. Since \textsf{CGrp} is a subcategory of \textsf{CMon}, we know that applying \discr$_M$ to a compact group yields a compact affine monoid. However, $\discr_M(G)$ is not a group in general, so the forgetful functor from \textsf{CMon} does not take \discr$_M(G)$ to a group, but instead yields a compact monoid. 

But, when applied to a compact group $G$ \emph{qua} compact monoid, the unit of the monad \discr$_M$ sends each $g\in G$ to $\delta_g\in \disc{G}$, and this is a monoid --- hence  group --- homomorphism. So, we define a new functor $H\colon \textsf{CMon}\to \textsf{CGrp}$ by $H(S) = H(1_S)$, the group of units\footnote{A \emph{unit} of a monoid is an element that has a two-sided inverse with respect to the identity $1_S$. The set of units $H(1_S) = \{x\in S\mid (\exists y\in S)\ xy = yx = 1_S\}$ forms the largest subgroup of $S$ that has $1_S$ as the identity; if $S$ is compact, then so is the group of units.} of the compact monoid $S$. If $\phi\colon S\to T$ is a morphism of compact affine monoids, then $\phi\vert_{H(1_S)}\colon H(1_S)\to H(1_T)$ is a morphism of compact groups, so $H$ defines a functor. 
\begin{theorem} 
The functor $H\colon \textsf{CAM}\to \textsf{CGrp}$ is right adjoint to the functor \discr$_G\colon \textsf{CGrp}\to \textsf{CAM}$. In fact, the composition $H\circ \discr_G$ defines a monad on \textsf{CGrp}. Moreover, for any compact group $G$, we have $H(\discr_G{G}) \simeq G$. Again, the unit of the monad is the Dirac map, and the image of $G$ in $\discr_G(G)$ is the set of extreme points. 
\end{theorem}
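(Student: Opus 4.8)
The plan is to obtain the adjunction by factoring it through two simpler ones, to read off the monad formally, and then to isolate the single piece of genuine analysis --- the computation of the group of units of $\discr_G(G)$. First I would observe that $H$, restricted to $\textsf{CAM}$, is the composite of the forgetful functor $U\colon\textsf{CAM}\to\textsf{CMon}$ with the ``group of units'' functor $H_1\colon\textsf{CMon}\to\textsf{CGrp}$, and that $H_1$ is right adjoint to the inclusion $J\colon\textsf{CGrp}\hookrightarrow\textsf{CMon}$. The latter is routine: a continuous monoid homomorphism $f\colon G\to S$ with $G$ a compact group carries each $g$ to a unit of $S$, since $f(g)f(g^{-1})=1_S=f(g^{-1})f(g)$, so $f$ corestricts to a continuous group homomorphism into $H_1(S)$ (compact by the footnoted fact), and this gives a natural bijection $\textsf{CMon}(JG,S)\cong\textsf{CGrp}(G,H_1(S))$ with inverse given by composing with $H_1(S)\hookrightarrow S$. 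On the other side, Theorem~\ref{thm:monoid} together with the free-monoid property used in the $\ST(n)$ example says precisely that $\discr_M$ is left adjoint to $U$, i.e.\ $\textsf{CAM}(\discr_M(S),M)\cong\textsf{CMon}(S,UM)$ via restriction along the Dirac map. Composing, $\discr_G=\discr_M\circ J\dashv H_1\circ U=H$, and chasing $\mathrm{id}_{\discr_G(G)}$ through the composite bijection shows the unit at $G$ is $g\mapsto\delta_g$. Equivalently and more concretely, restriction along $\delta\colon G\to\discr_G(G)$ is a bijection $\textsf{CAM}(\discr_G(G),M)\to\textsf{CGrp}(G,H(M))$: it lands in $H(M)$ because $\delta_g\ast_G\delta_{g^{-1}}=\delta_e$ makes every $\delta_g$ invertible, injectivity is density of the simple measures, and surjectivity is the free extension property applied to $G\to H(M)\hookrightarrow M$.

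Since an adjunction always induces a monad on the domain of the left adjoint, $H\circ\discr_G$ is a monad on $\textsf{CGrp}$, with unit the Dirac map just computed and multiplication $H(\varepsilon_{\discr_G(G)})$ assembled from the counit; no extra work is needed here. It then remains to show $H(\discr_G(G))\simeq G$, which reduces to proving that the unit $\eta_G\colon G\to H(\discr_G(G))$, $g\mapsto\delta_g$, is \emph{surjective} --- it is already an injective continuous group homomorphism, and a continuous bijective homomorphism between compact Hausdorff groups is automatically an isomorphism. Surjectivity is exactly the statement that the only convolution-invertible elements of $\discr(G)$ are point masses, and this is the crux of the whole theorem.

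To prove that, I would argue via supports. Suppose $\mu\ast_G\nu=\delta_e$. For every continuous $f\ge 0$ on $G$ with $f(e)=0$ we have $\int\!\int f(xy)\,d\mu(x)\,d\nu(y)=0$, so the continuous function $(x,y)\mapsto f(xy)$ vanishes on $\supp\mu\times\supp\nu$; since Urysohn's Lemma supplies, for each $g\neq e$, such an $f$ with $f(g)>0$, it follows that $(\supp\mu)(\supp\nu)=\{e\}$. In a group, a product of two nonempty subsets equal to $\{e\}$ forces each factor to be a singleton (inverse to the other), so $\supp\mu=\{g\}$ and hence $\mu=\delta_g$. Thus $H(\discr_G(G))=\{\delta_g\mid g\in G\}$ and $\eta_G$ is an isomorphism of compact groups. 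The last assertion is then immediate from the spatial-monad subsection: for the compact Hausdorff space $G$ the extreme points of $\disc{G}$ are exactly the Dirac measures, which is the image of $G$ under $\eta_G$. The main obstacle is precisely this group-of-units computation --- ruling out any ``spread-out'' invertible measure --- everything else being formal once the earlier results are available; a secondary technical point worth care is the interaction of convolution with supports, which the Urysohn argument lets one handle directly without invoking a general support formula.
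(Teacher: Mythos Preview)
Your proposal is correct. The paper does not give a proof of this theorem in the usual sense: the paragraphs preceding it set up the functors and note that the Dirac map is a group homomorphism, and the crucial claim that the only convolution-invertible probability measures on a compact group are the Dirac measures --- equivalently $H(\discr_G(G))\simeq G$ --- is handled by citing Wendel's Theorem in the paragraph following the example.

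Your approach differs in two respects. First, you make the adjunction explicit by factoring it as the composite $\discr_M\circ J\dashv H_1\circ U$ of two elementary adjunctions; the paper leaves this implicit, treating the result as parallel to the two earlier monads. Second, and more substantively, you give a direct, self-contained proof of the group-of-units computation via supports and Urysohn's Lemma, essentially reproving the relevant half of Wendel's result rather than invoking it. What you gain is independence from an external citation and a clean isolation of where the analytic content lies; what the paper's appeal to Wendel gains is brevity and a connection to the stronger statement (also due to Wendel) that the minimal ideal of $\disc{G}$ is Haar measure, which the paper exploits elsewhere. One minor point: your claim that Theorem~\ref{thm:monoid} asserts $\discr_M\dashv U$ is slightly generous --- that theorem literally says $\discr_M$ is a monad --- but the adjunction is established for $\discr_S$ just before, and the identical argument carries over, so this is a matter of citation rather than a gap.
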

\begin{example}
We again consider the case of classical channels. Here, given $n \geq 1$, $\ST(n)$ has for its group of units the permutation group $S(n)$. Applying \discr$_G$, we find that $\discr_G(S(n))$ is the free compact affine monoid over the group $S(n)$. But this is just the family $\DT(n)$ of doubly stochastic $n\times n$-matrices.
\end{example} 
We can use information about $\discr_G(S(n))$ to conclude information about $\DT(n)$. Wendel's Theorem~\cite{wendel} states that, for a compact group $G$, the compact monoid $\disc{G}$ has $\{ \delta_g\mid g\in G\}$ as its group of units, and the minimal ideal (every compact monoid has one --- cf.~\cite{hofmis}) is a zero, which in fact is Haar measure on $G$. In the case of $S(n)$, this reaffirms that the units of $\DT(n)$ are the permutations of $\overline{n}$, and that $\DT(n)$ has a zero, which is the equidistribution $\sum_{i\leq n} {1\over n} \delta_i$. 

\begin{corollary}
If $G$ is a finite group, then $\discr_G(G)$ is the free affine monoid over $G$, as well as the being the free compact affine monoid over $G$.
\end{corollary}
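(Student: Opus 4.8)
The plan is to exploit the fact that, for a \emph{finite} $G$, the passage from an abstract \emph{affine monoid} (a convex set carrying an associative multiplication that is affine in each variable separately and has a two-sided unit) to a \emph{compact affine monoid} involves neither a completion nor a choice of topology, so the free object over $G$ is literally the same object $(\disc{G},\ast_G) = \discr_G(G)$ in both categories.

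First I would identify the free affine monoid over an arbitrary monoid $M$. As a convex set it is the free convex set over the underlying set of $M$, namely the set $\textsf{Prob}_{sim}(M)$ of simple measures, with unit map $\eta_M\colon m\mapsto\delta_m$; I equip it with the separately affine extension of $(\delta_m,\delta_{m'})\mapsto\delta_{mm'}$, which is convolution $\ast_M$. Since a map affine in each of finitely many arguments is determined by its values on the generators $\delta_m$, associativity of $\ast_M$ and the unit law for $\delta_{1_M}$ follow immediately from the corresponding identities in $M$, so $(\textsf{Prob}_{sim}(M),\ast_M)$ is an affine monoid and $\eta_M$ a monoid homomorphism. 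For the universal property, given a monoid homomorphism $f\colon M\to N$ into (the underlying monoid of) an affine monoid $N$, the affine extension $\widehat f(\sum_i r_i\delta_{m_i}) = \sum_i r_i f(m_i)$ is the unique affine map extending $f$, and it is multiplicative precisely because the product of $N$ is separately affine: applying $\widehat f$ to $(\sum_i r_i\delta_{m_i})\ast_M(\sum_j s_j\delta_{m'_j}) = \sum_{i,j} r_i s_j\,\delta_{m_i m'_j}$ yields $\sum_{i,j} r_i s_j\,f(m_i)f(m'_j) = (\sum_i r_i f(m_i))(\sum_j s_j f(m'_j))$. Hence $\textsf{Prob}_{sim}(M)$ with convolution is the free affine monoid over $M$. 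Taking $M = G$ finite, every probability measure on the finite discrete space $G$ is simple, so $\textsf{Prob}_{sim}(G) = \disc{G}$ and the free affine monoid over $G$ is $(\disc{G},\ast_G)$, which is exactly $\discr_G(G)$.

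For the second assertion, recall that $\discr_M\colon\textsf{CMon}\to\textsf{CAM}$ is left adjoint to the forgetful functor, so $\discr_M(G) = (\disc{G},\ast_G)$ is the free compact affine monoid over the finite monoid $G$, and $\discr_G(G) = \discr_M(G)$ since $\discr_G$ is just $\discr_M$ precomposed with $\textsf{CGrp}\hookrightarrow\textsf{CMon}$. One can also argue directly that no gap is left: for finite $G$ the convex set $\disc{G}$ is a finite-dimensional simplex, which carries a unique compact Hausdorff topology compatible with its convex structure, and every affine map between finite-dimensional compact convex sets is automatically continuous; hence the algebraic universal property of the previous paragraph upgrades verbatim to the topological one. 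Either way $\discr_G(G)$ is simultaneously the free affine monoid and the free compact affine monoid over $G$; the concrete instance is $\DT(n) = \discr_G(S(n))$.

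The step I expect to need the most care is the verification of the universal property of $\textsf{Prob}_{sim}(M)$ --- and really only the one point that the canonical affine extension $\widehat f$ is multiplicative, which is exactly where separate affineness of the target's multiplication is used; this is the only place genuine content enters. Everything else --- that the free convex set over a set is its simple measures, that ``simple'' is vacuous over a finite set, that a finite-dimensional simplex has a unique compatible compact topology, and that affine maps out of it are continuous --- is routine, or already implicit in the treatment of $\discr_S$ above.
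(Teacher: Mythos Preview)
Your proof is correct and follows essentially the same route as the paper: both define the affine extension $\widehat{\phi}(\sum_i r_i\delta_{g_i}) = \sum_i r_i\phi(g_i)$ on simple measures, verify it is a morphism of affine monoids (you are more explicit than the paper about why separate affinity of the target multiplication gives multiplicativity), observe that for finite $G$ every measure is simple so uniqueness is immediate, and then invoke the earlier adjunction theorem for the compact claim. The only cosmetic difference is that the paper first notes $\phi(G)\subseteq H(1_S)$ (any monoid map out of a group lands in the group of units), whereas you prove the freeness statement for an arbitrary monoid $M$ and then specialize; neither step changes the substance of the argument.
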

\begin{proof}
If $G$ is a finite group, then  $\discr(G) = \{ \sum_{i\leq k} r_i\delta_{g_i}\mid k\in\Nat, r_i\in [0,1], \sum_i r_i = 1\ \wedge\ g_i\in G\}$ consists of simple measures. If $S$ is an affine monoid and $\phi\colon G\to S$ is a monoid homomorphism, then $\phi(G)\subseteq H(1_S)$, and so $\phi\colon G\to H(1_S)$ is a group homomorphism. Then $\widehat{\phi}(\sum_{i\leq k} r_i \delta_{g_i}) = \sum_{i\leq k} r_i \phi(g_i)$ is easily seen to be a morphism of affine monoids that satisfies $\widehat{\phi}(\delta_g) = \phi(g)$ for each $g\in G$, and $\widehat{\phi}$ is the unique such since $\discr_G(G)$ consists of simple measures. This shows $\discr_G(G)$ is the free affine monoid over $G$, and the Theorem implies it also is the free compact affine monoid over $G$ since $G$ is finite, and hence compact. 
\end{proof}
\begin{remark}
In~\cite{crowd,MartinFeng}, the free affine monoid over a finite group is employed to deduce properties of quantum channels. The Corollary shows that the free affine monoid over $G$ is nothing other than $\discr_G(G)$, which implies it is compact, as well as telling us that it has a zero --- the uniform distribution on $G$. We believe other useful properties about quantum channels over finite groups can be deduced from this observation.
\end{remark}
\section{Capacity as a topological concept}\label{sec:top}
In this section we develop a new approach to understanding the capacity of a classical channel. Our idea is to analyze capacity from a topological perspective, rather from the usual perspective of inequalities prevalent in information theory. We begin with a brief reprise of  the basics of Shannon information; the standard reference for this material is \cite{cover}.

If $X\colon \X\to \Re$ is a random variable on a finite probability space $(\X,p)$, then the \emph{entropy}\footnote{We use $H(X)$ to denote the entropy of a random variable $X$; this overloads our notation for the maximal subgroups of a monoid $S$, but we believe the context will be sufficient to make the meaning clear.}  of $X$ is defined as $H(X) = -\sum_{x\in \X} p(x)\log_2p(x)$.  If $Y\colon \Y\to \Re$ is another finite random variable, then the \emph{conditional entropy} of $Y$ given $X$ is 
\begin{equation}\label{eqn:condentrop}
H(Y\vert X) = \sum_{x\in \X} p(x) H(Y| X=x) = \sum_{x\in\X}p(x)\sum_{y\in\Y} p(y|x) \log_2 {1\over p(y|x)},
\end{equation}
and the \emph{mutual information in $X$ and $Y$} is 
$$\I(Y,X) = H(Y) - H(Y|X) = H(X) - H(X|Y).$$
If $C\colon \X\to \Y$ is a channel from inputs $\X$ to outputs $\Y$, then $C$ is an $\X\times \Y$-matrix whose $(x,y)$-entry is the \emph{conditional probability}  of output $y$ occurring, given that the input  was $x$. Each distribution $p$ on the inputs $\X$ then produces a corresponding distribution $p\cdot C$ on $\Y$. The \emph{capacity of a channel} is given by
$$
\textsf{Cap}\colon [\discr(\X)\to\discr(\Y)]\to [0,1]\quad \mathrm{by}\quad 
\Capp{C} = \sup_{p\in Pr(\X)} H(p\cdot C) - H(p\cdot C \mid p),
$$
i.e., $\Capp{C}$ is the supremum of the possible mutual information values $\I(p\cdot C \mid p)$ as $p$ ranges over the distributions on $\X$, the set of inputs. 

If we let $\X = \Y = \overline{n}$ and $C\colon \overline{n}\to\disc{\overline{n}}$ is a channel, then  
\begin{equation}\label{eqn:1}
\textsf{Cap}(C) = \sup_{\sum_{i\leq n} r_i\delta_{i}}
\left[H(\sum_{j\leq n} r_jC(j|1),\ldots, \sum_{j\leq n} r_jC(j|n)) - 
\sum_{i\leq n} r_i H(C(i|1),\ldots,C(i|n))\right].
\end{equation}
This formula requires some interpretation. 
\begin{enumerate}
\item First, the term to which $H$ is first applied --- $(\sum_{j\leq n} r_jC(j,1),\ldots, \sum_{j\leq n} r_jC(j,n))$ ---  represents a distribution on $\Y = \overline{n}$ obtained from $pC$, where $p = \sum_i r_i \delta_i$ is a distribution on $\X = \overline{n}$. This is the $p$-convex combination of the $n$ vectors $(C(1,1),\ldots,C(1,n)), \ldots, (C(n,1),\ldots,C(n,n))\in [0,1]^{n}$ comprising the rows of the channel $C$, where $C(i,j)$ denotes the $i,j$-entry of $C$, interpreted as a conditional probability.  Since $C$ is a channel, each of these rows is a probability distribution on $\overline{n}$. (As a sanity check, we see that applying $H$ to the convex combination $(\sum_{j\leq n} r_jC(j,1),\ldots, \sum_{j\leq n} r_jC(j,n))$ thus makes sense, since a convex combination of probability distributions is another such, and $H$ applies to probability distributions.) 

\item Now, the convex combination $pC = (\sum_{j\leq n} r_jC(j,1),\ldots, \sum_{j\leq n} r_jC(j,n))$ is a point on the polytope $K\subseteq [0,1]^n$ the rows of $C$ generate, so $$(\sum_{j\leq n} r_jC(j,1),\ldots, \sum_{j\leq n} r_jC(j,n)), H(\sum_{j\leq n} r_jC(j,1),\ldots, \sum_{j\leq n} r_jC(j,n)))\in [0,1]^n\times \Re$$
represents the point on the surface $H$ generates over the polytope $K$. 

\item Likewise the second term, $\sum_{i\leq n} r_i H(C(i|1),\ldots,C(i|n))$ of Equation~\ref{eqn:1} is  a $p$-convex combination, $p = \sum_{i\leq n} r_i\delta_i$, of the terms $H(C(i,1),\ldots,C(i,n))$, each of which is obtained by applying $H$ to a row of $C$, regarded as an elemnt of $[0,1]^n$. We can regard each of the points $H(C(i,1),\ldots,C(i,n))$ as being the $n+1$-coordinate of a tuple $(C(i,1),\ldots,C(i,n), H(C(i,1),\ldots,C(i,n))\in\Re^{n+1}$, and hence the $p$-convex combination of these points lies on the polytope these points generate.

\item Finally, the difference $H(\sum_{j\leq n} r_jC(j|1),\ldots, \sum_{j\leq n} r_jC(j|n)) - 
\sum_{i\leq n} r_i H(C(i|1),\ldots,C(i|n))$ is the difference in the $n+1$-coordinates described under 2.\ and 3., so it is height of the vertical line between the point $pC$ in 3.\ and the corresponding point 
$$(C(1,1),\ldots,C(1,n), H(C(1,1),\ldots,C(1,n)),\ldots,(C(n,1),\ldots,C(n,n), H(C(n,1),\ldots,C(n,n))))$$
on the surface $H$ generates over $K$. 
\end{enumerate}
Thus, $\Capp{C}$ as presented by Equation~\ref{eqn:1} takes the supremum of the differences between the value of $H$ at a convex combination of the rows of $C$ and the same convex combination of $H$ applied to the rows of $C$. It is well-known that entropy $H$ is a strictly concave function, and we now take advantage of this to formulate a result about \textsf{Cap}.

\begin{definition}\label{def:strconc}
Let $K\subseteq \Re^n$ be a convex set. A function $f\colon K\to \Re$ is \emph{strictly concave} if 
$$f(r{\buildrel \to\over x} + (1-r){\buildrel \to\over y}) > rf({\buildrel \to\over x}) + (1-r)f({\buildrel \to\over y})$$
for all $r\in (0,1)$ and all ${\buildrel \to\over x}, {\buildrel \to\over y}\in K$.
\end{definition}
We next recall \emph{Jensen's Inequality:} 

\begin{theorem}[Jensen (cf.~\cite{hardy})]
If $f\colon K\to \Re$ is a convex function defined on a convex subset $K$ of a vector space $V$,  then $Ef(X)\geq f(E(X))$ for a finite random variable $X\colon \X\to K$, where $E$ denotes expectation. Moreover, if $f$ is strictly convex, then $E(f(X)) = f(E(X))$ implies $X$ is constant. 
\end{theorem}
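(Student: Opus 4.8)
The plan is to reduce everything to the finitary statement that a convex function respects finite convex combinations, and then induct on the size of the support of $X$. Since $(\X,p)$ is finite, list its support as $\{\omega_1,\dots,\omega_n\}$ with $p_i := p(\omega_i)>0$, $\sum_i p_i = 1$, and set $x_i := X(\omega_i)\in K$. Convexity of $K$ guarantees $E(X) = \sum_i p_i x_i\in K$, so $f(E(X))$ makes sense, and $E(f(X)) = \sum_i p_i f(x_i)$. Thus the inequality to be proved is exactly $\sum_i p_i f(x_i)\ge f\big(\sum_i p_i x_i\big)$.

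For the inequality I would argue by induction on $n$. The case $n=1$ is trivial and $n=2$ is the defining inequality of convexity. For the step, given $n+1$ points put $r = p_{n+1}$; if $r\in\{0,1\}$ there is nothing to do, otherwise let $y = \sum_{i\le n}\tfrac{p_i}{1-r}x_i$, which again lies in $K$. The two-point inequality gives $f\big((1-r)y + rx_{n+1}\big)\le (1-r)f(y) + rf(x_{n+1})$, the induction hypothesis gives $f(y)\le \sum_{i\le n}\tfrac{p_i}{1-r}f(x_i)$, and since $(1-r)y + rx_{n+1} = \sum_{i\le n+1}p_i x_i$, chaining these estimates yields $f\big(\sum_{i\le n+1}p_i x_i\big)\le \sum_{i\le n+1}p_i f(x_i)$.

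For the equality clause I would run the same induction while tracking tightness. Assume all $p_i>0$ (null atoms can be discarded, and ``$X$ constant'' is meant on the support of $p$). If $\sum_i p_i f(x_i) = f\big(\sum_i p_i x_i\big)$, then both inequalities in the chain above are equalities: equality in the two-point step with $r\in(0,1)$ and $f$ strictly convex forces $y = x_{n+1}$, while $f(y) = \sum_{i\le n}\tfrac{p_i}{1-r}f(x_i)$ forces, by the induction hypothesis, $x_1 = \dots = x_n$; that common value then equals $y$, hence also $x_{n+1}$, so all the $x_i$ coincide and $X$ is constant. The base case $n=2$ is just the contrapositive of Definition~\ref{def:strconc}.

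The main obstacle here is bookkeeping rather than genuine mathematics: one must handle the degenerate weights cleanly (a null atom, or a single atom of mass $1$), keep every intermediate convex combination inside $K$ so that $f$ is defined there --- the one place convexity of $K$ is used --- and verify that equality in Jensen really does propagate to equality in every link of the inductive chain. A slicker, non-inductive alternative is to pick a supporting affine functional $\ell\le f$ at $E(X)$ with $\ell(E(X)) = f(E(X))$, whence $E(f(X))\ge E(\ell(X)) = \ell(E(X)) = f(E(X))$; but producing such an $\ell$ for an arbitrary vector space $V$, and at a point possibly on the boundary of $K$, needs a Hahn--Banach argument, so for a finite random variable I would favor the elementary induction.
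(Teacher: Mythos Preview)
Your argument is correct and entirely standard. However, there is nothing to compare it against: the paper does not prove Jensen's Inequality but simply cites it as a classical result from \cite{hardy}. What the paper \emph{does} prove is the Lemma immediately following, namely that strict concavity is equivalent to the strict inequality $f\bigl(\sum_{i\le m} r_i x_i\bigr) > \sum_{i\le m} r_i f(x_i)$ for all genuine $m$-term convex combinations, and that proof is precisely the induction on $m$ you carry out here (group the last two terms, apply the induction hypothesis to the resulting $m$-term combination, then apply the two-point definition). So while your write-up is not being compared to a proof of the same statement, the technique you chose is exactly the one the paper deploys in the next result; in effect you have anticipated the paper's own argument, just applied to the weak rather than the strict inequality, together with the equality-tracking clause.
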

Jensen's Inequality is a fundamental result of information theory; for example, it is crucial for proving mutual information is non-negative, that the mutual information in a pair of random variables is $0$ iff the random variables are independent, and that entropy itself is strictly concave (cf.~\cite{cover}, Chapter 2).  Since $f$ is (strictly) concave iff $-f$ is (strictly) convex, the  following  generalizes Jensen's Inequality by strengthening the result in case $f$ is strictly convex.

\begin{lemma}
If $f\colon K\to \Re$ be defined from a convex subset $K$ to $\Re$. Then the following are equivalent:
\begin{enumerate}
\item $f$ is strictly concave.
\item For all $r_1,\ldots, r_m\in (0,1)$ and all ${\buildrel \to\over x_1},\ldots , {\buildrel \to\over x_m}
\in K$, 
$$\sum _{i\leq m} r_i = 1\quad\Rightarrow\quad f\left (\sum_{i\leq m} r_i{\buildrel \to\over x_i}\right) >
\sum_{i\leq m} r_i f({\buildrel \to\over x_i}).$$
\end{enumerate}
\end{lemma}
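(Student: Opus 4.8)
The plan is to establish the two implications separately. The implication (2)$\Rightarrow$(1) is immediate: taking $m=2$ in statement (2) recovers precisely the defining inequality of strict concavity from Definition~\ref{def:strconc}.

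For (1)$\Rightarrow$(2) I would argue by induction on $m$, starting from $m=2$ (the case $m=1$ being vacuous, since $r_1\in(0,1)$ is incompatible with $\sum_i r_i=1$). The base case $m=2$ is again Definition~\ref{def:strconc}. For the inductive step, assume the inequality in (2) holds for $m$, and let $r_1,\dots,r_{m+1}\in(0,1)$ with $\sum_{i\le m+1}r_i=1$ and $x_1,\dots,x_{m+1}\in K$. Set $s=\sum_{i\le m}r_i=1-r_{m+1}$. Since every $r_j>0$ and $m\ge2$, one checks $0<r_i<s<1$ for each $i\le m$, so $r_i/s\in(0,1)$ and $\sum_{i\le m}r_i/s=1$; by convexity of $K$ the point $y:=\sum_{i\le m}(r_i/s)\,x_i$ lies in $K$, and $\sum_{i\le m+1}r_ix_i=s\,y+r_{m+1}x_{m+1}$. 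Applying strict concavity (the two-point form) to this combination and then the inductive hypothesis to $f(y)$ yields
$$f\Bigl(\sum_{i\le m+1}r_ix_i\Bigr)>s\,f(y)+r_{m+1}f(x_{m+1})>s\sum_{i\le m}\frac{r_i}{s}\,f(x_i)+r_{m+1}f(x_{m+1})=\sum_{i\le m+1}r_if(x_i),$$
where strictness survives the second step because the coefficient $s$ is positive. This closes the induction.

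The only point requiring a little care — and the sole thing I would flag as a potential obstacle — is verifying that the renormalized weights $r_i/s$ again lie in the \emph{open} interval $(0,1)$, so that the inductive hypothesis genuinely applies; this is exactly where the hypothesis $r_i\in(0,1)$ (rather than $r_i\in[0,1]$), combined with $m\ge2$, is used, since it forces $s=\sum_{j\le m}r_j$ to strictly exceed each individual $r_i$. Everything else is the standard ``Jensen-by-induction'' bookkeeping, carried out with all inequalities kept strict throughout.
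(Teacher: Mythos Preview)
Your proof is correct and follows essentially the same approach as the paper: both directions are handled identically, and the nontrivial implication (1)$\Rightarrow$(2) is proved by the same induction on $m$ with base case $m=2$. The only cosmetic difference is that the paper groups the final two terms $x_m,x_{m+1}$ (applying the inductive hypothesis first and then two-point concavity), whereas you group the first $m$ terms (applying two-point concavity first and then the inductive hypothesis); your extra care in checking that the renormalized weights lie in $(0,1)$ is, if anything, an improvement in exposition.
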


\begin{proof}
$(ii)\Rightarrow (i)$ is obvious. For the reverse direction, we proceed by induction on $m$. The base case, $m = 2$, is just the definition of strict concavity. So suppose (ii) holds for some $m$, and consider a family $r_1,\ldots, r_{m+1}\in [0,1]$ and ${\buildrel \to\over x_1},\ldots , {\buildrel \to\over x_{m+1}} \in K$. Since $r_i\in (0,1)$ for each $i$,
\begin{eqnarray*}
f\left (\sum_{i\leq m+1} r_i{\buildrel \to\over x_i}\right) &=& 
f\left (\sum_{i\leq m-1} r_i{\buildrel \to\over x_i} 
+ (r_m + r_{m+1})({r_m\over r_m + r_{m+1}} {\buildrel\to\over x_m} 
+ {r_{m+1}\over r_m + r_{m+1}} {\buildrel\to\over x_{m+1}})\right ) \\
& > & \sum_{i\leq m-1} r_i f({\buildrel \to\over x_i}) + (r_m + r_{m+1}) 
f({r_m\over r_m + r_{m+1}} {\buildrel\to\over x_m} 
+ {r_{m+1}\over r_m + r_{m+1}} {\buildrel\to\over x_{m+1}})\\
&> & \sum_{i\leq m-1}  r_i f({\buildrel\to\over x_i}) + r_m f({\buildrel\to\over x_m}) + r_{m+1} f({\buildrel\to\over x_{m+1}})\\
& = & \sum_{i\leq m+1} r_i f({\buildrel\to\over x_i}).\\
\end{eqnarray*}

\vspace{-.5in}
\end{proof}\ 

\medbreak
\noindent\textbf{Notation:}
If $K\subseteq \Re^n$ be a compact convex set, then $\textsf{Con}_n(K)$ denotes the family of convex polytopes $\conv( \{x_1,\ldots,x_k\})$ generated by finite subsets $\{x_1,\ldots,x_k\}\subseteq K$, where $k\leq n$. 

Also note that $P_n \defi \{ x\in [0,1]^n\mid \sum_i x_i = 1\}$ is a compact, convex subset of $[0,1]^n$, which we identify with the family $\disc{\overline{n}}$ of probability distributions on $\overline{n}$.
\medbreak
\begin{proposition}\label{prop:strictconvex}
Let $K\subseteq \Re^n$ be a compact, convex set, and let  $f\colon K\to \Re$ be continuous and strictly concave. Define 
$$\widehat{f}\colon \textsf{Con}_n(K)\to \Re^{op}\quad \mathrm{by}\quad 
\widehat{f}(\conv(\{ x_1,\ldots,x_k\})) = \sup_{(r_1,\ldots,r_k)\in [0,1]^k} f\left (\sum_{i\leq k} r_i x_i\right ) - \sum_{i\leq k} r_i f(x_i).$$
Then $\widehat{f}$ is continuous and  monotone with respect to reverse inclusion.
\end{proposition}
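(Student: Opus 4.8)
The plan is to establish three things, the first two coming out of a single estimate: that $\widehat f$ is genuinely a function of the polytope (independent of the chosen generators), that it is monotone for reverse inclusion, and that it is continuous. Throughout, for a finite tuple $(x_1,\ldots,x_k)$ in $K$ I write $g_{(x_1,\ldots,x_k)}(r)=f\big(\sum_{i\le k}r_ix_i\big)-\sum_{i\le k}r_if(x_i)$ for $r$ in the simplex $\Delta_k=\{r\in[0,1]^k:\sum_i r_i=1\}$ of convex coefficients, so that $\widehat f(\conv\{x_1,\ldots,x_k\})=\sup_{r\in\Delta_k}g_{(x_1,\ldots,x_k)}(r)$. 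By Jensen's Inequality applied to the convex function $-f$ one has $g_{(x_1,\ldots,x_k)}\ge 0$, and since $f$ is continuous and $\Delta_k$ is compact the supremum is attained, so $\widehat f$ is $[0,\infty)$-valued.

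First I would dispatch well-definedness and monotonicity together. Suppose $\conv\{x_1,\ldots,x_k\}\subseteq\conv\{v_1,\ldots,v_l\}$. Then each $x_i$ lies in $\conv\{v_1,\ldots,v_l\}$, so $x_i=\sum_j\mu_{ij}v_j$ for some $\mu_i\in\Delta_l$; given $r\in\Delta_k$, put $s_j=\sum_i r_i\mu_{ij}$, so that $s\in\Delta_l$, $\sum_j s_jv_j=\sum_i r_ix_i$, and, by concavity of $f$,
$$\sum_j s_jf(v_j)=\sum_i r_i\Big(\sum_j\mu_{ij}f(v_j)\Big)\le\sum_i r_i\, f\Big(\sum_j\mu_{ij}v_j\Big)=\sum_i r_if(x_i),$$
whence $g_{(v_1,\ldots,v_l)}(s)\ge g_{(x_1,\ldots,x_k)}(r)$. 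Taking the supremum over $r$ gives $\widehat f(\conv\{v_j\})\ge\widehat f(\conv\{x_i\})$. Interchanging the two generating sets whenever the polytopes coincide shows the value depends only on the polytope, so $\widehat f$ is well-defined; and for a genuine inclusion the inequality says exactly that $\widehat f$ is order-preserving from $(\textsf{Con}_n(K),\supseteq)$ to $\Re^{op}$, i.e.\ monotone for reverse inclusion.

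For continuity I would factor $\widehat f$ through the map $q\colon K^n\to\textsf{Con}_n(K)$, $q(x_1,\ldots,x_n)=\conv\{x_1,\ldots,x_n\}$ (repetitions allowed, so $q$ is onto). Since $q$ is continuous for the Hausdorff-metric topology, $K^n$ is compact and $\textsf{Con}_n(K)$ is Hausdorff, $q$ is a closed surjection and hence a quotient map, so $\widehat f$ is continuous iff $\widehat f\circ q$ is. By well-definedness, $\widehat f\circ q$ is the map $(x_1,\ldots,x_n)\mapsto\max_{r\in\Delta_n}F(x_1,\ldots,x_n,r)$ with $F(x_1,\ldots,x_n,r)=f\big(\sum_i r_ix_i\big)-\sum_i r_if(x_i)$ jointly continuous on the compact space $K^n\times\Delta_n$; a partial maximum of a jointly continuous function over a compact parameter space is continuous in the remaining variables (immediate from uniform continuity of $F$ on $K^n\times\Delta_n$), so $\widehat f\circ q$, and therefore $\widehat f$, is continuous.

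I expect the continuity step to be the only real obstacle, and the difficulty there is purely bookkeeping about representations: a single polytope is $\conv$ of many tuples, with varying numbers of vertices, so one cannot argue "pointwise in the generators" until $\widehat f$ is known to be a function of the polytope. This is exactly why the well-definedness estimate is proved first and then reused to factor $\widehat f$ through the quotient $q$ from the compact space $K^n$; the remaining ingredient, continuity of a partial supremum of a jointly continuous function over a compact set, is standard.
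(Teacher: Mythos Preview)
Your argument is correct for what the proposition literally asserts, and in two respects it is more careful than the paper's: you explicitly verify that $\widehat f$ is independent of the chosen generating tuple, and you give a genuine proof of continuity by factoring through the compact quotient $q\colon K^n\to\textsf{Con}_n(K)$ and invoking continuity of a partial maximum. The paper, by contrast, dismisses both monotonicity and continuity in a sentence each (the latter via an appeal to filtered intersections in the hyperspace that is more of a gesture than a proof).

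The substantive difference is that the paper's proof actually establishes \emph{strict} monotonicity, even though the statement only says ``monotone''. This is the content the paper really needs: Corollary~\ref{prop:strictconv} invokes Proposition~\ref{prop:strictconvex} precisely for strict monotonicity, in order to conclude that \textsf{cap} measures $\textsf{Con}_n(P_n)$ via Proposition~\ref{prop:strictrmono}. Your Jensen estimate uses only concavity and therefore yields only the weak inequality $\widehat f(\conv\{v_j\})\ge\widehat f(\conv\{x_i\})$. To upgrade it to a strict inequality when the inclusion is proper you would need to observe, as the paper does, that (i) the maximum of $g_{(x_1,\ldots,x_k)}$ is attained at some $r$ with all $r_i>0$ (since at any vertex $r=e_i$ the gap is zero and strict concavity forces the gap to be positive in the interior), and (ii) a proper inclusion forces at least one $x_i$ to be a nondegenerate convex combination of the $v_j$, so that your displayed Jensen inequality becomes strict for that $i$, and hence $\sum_j s_j f(v_j)<\sum_i r_i f(x_i)$. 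With that one extra paragraph your approach recovers everything the paper uses downstream.
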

\begin{proof}
The compactness of $K$ implies that the family $\textsf{Con}_n(K)$ is closed under filtered intersections in the hyperspace of non-empty, closed subsets of $K$, and then the continuity of $\widehat{f}$ follows from the continuity of $f$. This map is clearly monotone. To show it is strictly monotone, let $\conv(\{ x_1,\ldots, x_k\}), \conv( \{y_1,\ldots, y_m\})\in \textsf{Con}_n(K)$ with 
$\conv(\{x_1,\ldots,x_k\}) \subsetneq \conv( \{y_1,\ldots, y_m\})$. Since $f$ is continuous, $\widehat{f}(\conv(\{ x_1,\ldots,x_k\}))$ assumes its value at some point in $\conv(\{ x_1,\ldots,x_k\})$, and since $f$ is strictly concave, this value is not assumed at $x_i$ for any index $i$. Thus, there is a $k$-tuple $(r_1,\ldots,r_k)\in (0,1)^k$ with 
$$\widehat{f}(\conv(\{ x_1,\ldots,x_k\})) = f\left (\sum_{i\leq k} r_i x_i\right ) - \sum_{i\leq k} r_i f(x_i).$$ Because $\conv(\{ x_1,\ldots,x_k\})\subsetneq \conv(\{ y_1,\ldots,y_n\})$, for each $i\leq k$, there is $(s_{i,1},\ldots, s_{i,m})\in [0,1]^m$ with $x_i = \sum_{j\leq m} s_{i,j} y_j$, and at least one of the families $(s_{i,j})_{j\leq m}\in (0,1)^m$. For this index $i$, we have 
$f(x_i) = f(\sum_{j\leq m} s_{i,j} y_j) > \sum_{j\leq m} s_{i,j} f(y_j)$. Lemma~\ref{def:strconc}(ii) then implies that 
$$\sum_{i\leq k} r_i f(x_i) = \sum_{i\leq k} r_if\left (\sum_{j\leq m} s_{i,j} y_j\right ) > \sum_{i\leq k}\sum_{j\leq m} r_i s_{i,j} f(y_j),$$ and so
\begin{eqnarray*}
\widehat{f}(\conv(\{ x_1,\ldots,x_k\})) &=& f(\sum_{i\leq k} r_i x_i) - \sum_{i\leq k} r_i f(x_i)\\
&\leq& f\left (\sum_{i\leq k} r_i \left (\sum_{j\leq m} s_{i,j} y_j\right)\right) - \sum_{\i\leq k}\sum_{j\leq m} r_i s_{i,j} f(y_j)\\
&\leq& \sup_{(s_1,\ldots, s_m)\in [0,1]^m} f\left (\sum_j s_j y_j\right ) - \sum_j s_j f(y_j)\\ 
&= & \widehat{f}(\conv(\{ y_1,\ldots , y_m\})).
\end{eqnarray*}

\vspace{-.35in}
\end{proof}
\section{Domains}
In this section, we introduce domains, which are the next ingredient in our analysis of classical channels. For details about these structures, a standard reference is \cite{abrjung} or \cite{comp}. A \emph{partial order} is a non-empty set $P$ endowed with a reflexive, antisymmetric and transitive relation. A subset  $D\subseteq P$ is \emph{directed} is every finite subset of $D$ has an upper bound in $D$; $P$ is \emph{directed complete} if every directed subset of $P$ has a least upper bound in $P$. We denote directed complete partial orders as dcpos. 

If $P$ and $Q$ are dcpos, then $f\colon P\to Q$ is \emph{Scott continuous} if $f$ is monotone and preserves suprema of directed sets. An equivalent definition is available using topology: a subset $U\subseteq P$ is \emph{Scott open} if $U = \ua U = \{ x\in P\mid (\exists u\in U)\ u\leq x\}$ is an upper set, and for any directed subset $D\subseteq P$, if $\sup D\in U$, then $D\cap U\not=\emptyset$. The Scott-open sets form a topology on $P$, called the Scott topology, and the functions $f\colon P\to Q$ that are continuous with respect to this topology are exactly those that are Scott continuous, as defined above. 

\begin{example}\label{exam:dom}
Let $K$ be a compact convex subset of a topological vector space, and let $Con(K)$ denote the compact convex subsets of $K$. We can order these by reverse inclusion: $C\sqsubseteq C'\ \Leftrightarrow\ C'\subseteq C$. A directed family $D\subseteq Con(K)$ is simply a filterbasis, and since $K$ is compact and each set in $D$ is convex, the set $\bigcap D\in Con(K)$. Thus $Con(K)$ is a dcpo. 

We can say more. If $C, C'\in Con(K)$ and $C'\subseteq C^\circ$, the interior of $C$, then given any directed set $D$ with $\bigcap D\subseteq C'$, there is some $E\in D$ with $E\subseteq C^\circ$, and hence $E\subseteq C$. In this case we say $C$ \emph{is way-below} $C'$, and we write $C\ll C'$. In fact, if the ambient topological vector space is locally convex, then each $C'\in Con(K)$ is the filtered intersection of those $C$ satisfying $C\ll C'$: this follows from the fact that in any compact Hausdorff space, each compact subset is the filtered intersection of its compact neighborhoods, and the same applies to compact, convex sets in a locally convex topological vector space.
\end{example} 
A \emph{domain} is a dcpo satisfying $\{ y\in P\mid y\ll x\}$ is directed and $x = \sup \{y\in P\mid y\ll x\}$ for each $x\in P$. The original motivation for domains was to provide semantic models for high-level programming languages, where the fact that any Scott-continuous selfmap on a domain with least element has a least fixed point gives a canonical model for recursion. 

Motivated by examples of selfmaps that are not Scott continuous, Martin~\cite{martin:thesis} devised another approach to guaranteeing fixed points, using the concept of a \emph{measurement}: A  Scott-continuous function $m\colon P\to [0,\infty)^{op}$  is said to \emph{measure the content at $x\in P$} if, given $U\subseteq P$ Scott open, $$x\in U\ \Rightarrow\ (\exists \epsilon > 0)\ m_\epsilon(x)\subseteq U,$$
 where $m_\epsilon(x) = \{ y\leq x\mid m(y) - m(x) < \epsilon\}$. We say that $m$ \emph{measures $P$} if $m$ measures the content at $x$ for each $x\in P$. 

For our next result, we need some notions from topology. Recall that s subset $A\subseteq X$ of a topological space is \emph{saturated} if $A = \bigcap \{ U\mid A\subseteq U\ \mathrm{open}\}$. The \emph{saturation} of a subset $A$ is $\bigcap \{ U\mid A\subseteq U\ \mathrm{open}\}$, so a set is saturated iff it is equal to its saturation. Moreover, a subset is compact iff its saturation is compact. 
 \begin{definition}
 A continuous function $f\colon X\to Y$ between topological spaces is \emph{proper} if $f^{-1}(K)$ is compact for each saturated, compact subset $K\subseteq Y$. 
 \end{definition}
A Scott-continuous map $m\colon P\to Q$ between continuous posets is proper iff $m^{-1}(\ua y)$ is Scott compact for each $y\in Q$.  We say that a Scott-continuous mapping $m\colon P\to Q$ between posets is \emph{proper at $x\in P$} if $\da x \cap m^{-1}(\ua y)$ is Scott compact in $\da x$ for each $y\in Q$. We use $[0,\infty)^{op}$ to denote the non-negative real numbers in the \emph{dual order}; the following result is from \cite{hofmis}:
 \begin{proposition}\label{prop:strictrmono}
Let $P$ be a domain and let $m\colon P\to [0,\infty)^{op}$ be Scott continuous. If $m$ is proper at $x\in P$, then the following are equivalent:
\begin{enumerate}
\item $m$ measures the content at $x$.
\item  $m$ is strictly monotone at $x\in P$: i.e., $y\leq x\ \&\ m(y) = m(x)\ \Rightarrow\ y=x.$
\end{enumerate}
In particular, a Scott-continuous, proper map $m\colon P\to [0,\infty)^{op}$ measures $P$ iff $m$ is strictly monotone at each $x\in P$. 
\end{proposition}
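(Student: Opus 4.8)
The plan is to prove the two implications of the equivalence separately --- the backward direction being where properness is genuinely used --- and then obtain the closing ``in particular'' clause by applying the equivalence at each point of $P$.

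For $(1)\Rightarrow(2)$ I would argue directly, and note that properness plays no role. Suppose $m$ measures the content at $x$ and let $y\leq x$ with $m(y)=m(x)$. If $y\neq x$, then since $P$ is a domain we have $x=\sup\{\,z\mid z\ll x\,\}$ with the set on the right directed, so not every $z\ll x$ can lie below $y$; for otherwise $x\leq y$, hence $x=y$. Choose $z\ll x$ with $z\not\leq y$. Then $\{\,w\mid z\ll w\,\}$ is a Scott-open neighbourhood of $x$, so there is $\epsilon>0$ with $m_\epsilon(x)\subseteq\{\,w\mid z\ll w\,\}$. Since $m$ is monotone into $[0,\infty)^{op}$ and $y\leq x$, we have $m(y)-m(x)=0<\epsilon$, so $y\in m_\epsilon(x)$; therefore $z\ll y$, in particular $z\leq y$, a contradiction. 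Hence $y=x$.

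The substance is in $(2)\Rightarrow(1)$. Fix $x$, assume $m$ is strictly monotone at $x$ and proper at $x$, and let $U$ be Scott open with $x\in U$. For each real $r\geq m(x)$ set $K_r=\da x\cap m^{-1}(\ua r)=\{\,y\leq x\mid m(y)\leq r\,\}$, with $\ua r$ taken in $[0,\infty)^{op}$. Each $K_r$ is Scott compact by properness at $x$; it is upward closed within $\da x$ --- if $y\leq y'\leq x$ then monotonicity of $m$ forces $m(y')\leq m(y)\leq r$ --- hence saturated; and the family $\{K_r\}_{r>m(x)}$ is filtered, decreasing as $r\downarrow m(x)$. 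The key point is that every $y\leq x$ satisfies $m(y)\geq m(x)$ (again by monotonicity into the dual order), so $\bigcap_{r>m(x)}K_r=\{\,y\leq x\mid m(y)=m(x)\,\}$, which equals $\{x\}$ by strict monotonicity at $x$. Since $\da x$ is itself a domain, its Scott topology is well-filtered; as $\{x\}=\bigcap_r K_r\subseteq U$, some $K_{r_0}$ with $r_0>m(x)$ is already contained in $U$. Putting $\epsilon=r_0-m(x)>0$ gives $m_\epsilon(x)=\{\,y\leq x\mid m(y)-m(x)<\epsilon\,\}\subseteq K_{r_0}\subseteq U$, so $m$ measures the content at $x$.

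For the closing statement, a proper $m$ is proper at every $x$ --- $\da x$ is Scott closed, so $\da x\cap m^{-1}(\ua r)$ is a closed, hence compact, subset of the compact set $m^{-1}(\ua r)$ --- and then applying the equivalence pointwise shows $m$ measures $P$ exactly when it is strictly monotone at each point. The step I expect to require the most care is the topological bookkeeping in $(2)\Rightarrow(1)$: one must fix the right Scott topology on $\da x$ so that the sets $K_r$ are genuinely compact and saturated there, check that $\da x$ inherits the structure of a domain from $P$, and invoke well-filteredness in the correct form; these are exactly the facts established in \cite{hofmis}, which I would cite for this input rather than re-derive.
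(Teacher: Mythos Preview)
The paper does not actually prove this proposition: it is stated as ``the following result is from \cite{hofmis}'' and no argument is given in the text, so there is no in-paper proof to compare against. Your proposal is therefore being assessed on its own merits.

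Your argument is correct and is the standard route. The direction $(1)\Rightarrow(2)$ is fine and, as you note, does not use properness. For $(2)\Rightarrow(1)$, the decomposition via the filtered family $K_r=\da x\cap m^{-1}([0,r])$, the identification $\bigcap_{r>m(x)}K_r=\{x\}$ from strict monotonicity, and the appeal to well-filteredness of continuous dcpos to extract some $K_{r_0}\subseteq U$ is exactly the expected mechanism. The bookkeeping you flag --- that $\da x$ is again a domain, that its intrinsic Scott topology agrees with the subspace Scott topology (so the $K_r$ are compact and saturated in the right space), and that domains are sober hence well-filtered --- is routine and is precisely what the cited source \cite{hofmis} supplies. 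One small simplification: you can run the well-filteredness argument directly in $P$ rather than in $\da x$, since a set that is Scott compact in a Scott-closed subspace is Scott compact in $P$, and the $K_r$ are upper sets in $P$ (not just in $\da x$); this sidesteps the subspace-topology comparison entirely. The closing ``in particular'' clause is handled correctly.
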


\begin{corollary}\label{prop:strictconv}
Let $P_n \defi \{ x\in [0,1]^n\mid \sum_i x_i = 1\}$ be the compact, convex set of distributions on $\overline{n}$. Then $(\textsf{Con}_n(P_n),\supseteq)$ is a domain, and  the mapping 
$\textsf{cap}\colon \textsf{Con}_n(P_n)\to [0,\infty)^{op}$ by 
$$\textsf{cap}(\conv( F)) = \sup\, \left\{H(\sum_{x\in F} r_x\cdot x) - (\sum_{x\in F} r_xH(x)) \mid r_x\geq 0, \sum_{x\in F} r_x = 1\right\}$$
measures $(\textsf{Con}_n(P_n),\supseteq)$.
\end{corollary}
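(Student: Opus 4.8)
The plan is to deduce the statement from Proposition~\ref{prop:strictrmono}, which needs four ingredients: that $(\textsf{Con}_n(P_n),\supseteq)$ is a domain, and that the map $\textsf{cap}$ is Scott continuous, proper, and strictly monotone at every point. The first ingredient is exactly Example~\ref{exam:dom} with $K=P_n$, a compact convex subset of the locally convex space $\Re^n$: reverse inclusion makes $\textsf{Con}_n(P_n)$ a dcpo whose directed suprema are filtered intersections, $C\ll C'$ holds precisely when $C'$ lies in the interior of $C$ relative to $P_n$, and each member is the filtered intersection of those containing it in their relative interior. For the remaining ingredients the key remark is that $\textsf{cap}$ is the map $\widehat{f}$ of Proposition~\ref{prop:strictconvex} for $f=H$, the Shannon entropy $H(x)=-\sum_i x_i\log_2 x_i$.

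First I would record that $H$ is continuous on the closed simplex $P_n$ (with $0\log_2 0=0$) and strictly concave there: writing $H(x)=\sum_i g(x_i)$ with $g(t)=-t\log_2 t$ strictly concave on $[0,1]$, any two distinct points of $P_n$ disagree in at least two coordinates, so strict concavity of $g$ in those coordinates together with concavity in the rest forces strict concavity of $H$. Proposition~\ref{prop:strictconvex} then delivers at once that $\textsf{cap}=\widehat{H}$ is monotone for reverse inclusion and preserves filtered intersections, hence is Scott continuous; and the strict monotonicity established inside its proof gives $\Capp{C'}<\Capp{C}$ whenever $C'\subsetneq C$ are polytopes (if one reads $\textsf{Con}_n(P_n)$ as all compact convex subsets of $P_n$, this extends to them because the supremum defining $\widehat{H}(C)$ is attained at a finitely supported measure --- with the barycenter fixed one is maximising the affine functional $\mu\mapsto-\int H\,d\mu$ over a compact set of measures whose extreme points have bounded support, a Carath\'eodory-type fact --- so that $\widehat{H}(C)=\widehat{H}(\conv(G))$ for a finite $G\subseteq C$, and adjoining a point of $C'\setminus\conv(G)$ strictly increases the value). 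Thus three of the four ingredients are in hand.

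What remains, and where I expect the only genuine obstacle, is properness: that $\textsf{cap}^{-1}(\ua y)=\{C\mid\Capp{C}\leq y\}$ is Scott compact for every $y\geq0$. My plan is to exploit that $\textsf{Con}_n(P_n)$, being a closed subspace of the (compact) Vietoris hyperspace of $P_n$, is compact in its Lawson topology, while $\textsf{cap}^{-1}(\ua y)$ is Lawson closed --- it is the preimage of the Lawson-closed principal up-set $\ua y=[0,y]$ under $\textsf{cap}$, which a routine check shows is Lawson continuous (monotone, Scott continuous, and preserving filtered unions) --- so it is Lawson compact, hence Scott compact. Granting this, the ``in particular'' clause of Proposition~\ref{prop:strictrmono} applies to the Scott-continuous, proper, everywhere strictly monotone map $\textsf{cap}\colon(\textsf{Con}_n(P_n),\supseteq)\to[0,\infty)^{op}$ and yields that $\textsf{cap}$ measures $(\textsf{Con}_n(P_n),\supseteq)$. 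One can also circumvent Proposition~\ref{prop:strictrmono} and verify the measuring condition directly: given a Scott-open $U\ni C$, pick (using that $C$ is the filtered intersection of its approximants and $U$ is an upper set) $E\ll C$ with $\ua E\subseteq U$, set $\delta=d(C,P_n\setminus E^\circ)>0$ and $\epsilon=\min\{\widehat{H}(\conv(C\cup\{p\}))-\widehat{H}(C)\mid p\in P_n,\ d(p,C)\geq\delta\}$, which is positive by strict monotonicity and compactness; then any $C'\supseteq C$ with $\Capp{C'}-\Capp{C}<\epsilon$ cannot contain a point outside $E^\circ$, so $C'\subseteq E$ and $C'\in\ua E\subseteq U$, witnessing the measuring condition at $C$. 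Either way, the substantive points are the same: a compactness/uniformity argument and strict monotonicity (beyond polytopes, if needed).
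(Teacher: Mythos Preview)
Your proposal is correct and follows essentially the same route as the paper: invoke Example~\ref{exam:dom} for the domain structure, identify $\textsf{cap}$ with the map $\widehat{H}$ of Proposition~\ref{prop:strictconvex} to get Scott continuity and strict monotonicity, establish properness via Lawson compactness of the hyperspace, and conclude with Proposition~\ref{prop:strictrmono}. The only place where the paper is slightly more careful is in passing from $\textsf{Con}(P_n)$ (all compact convex subsets, as in Example~\ref{exam:dom}) to $\textsf{Con}_n(P_n)$ (polytopes with at most $n$ generators): it explicitly notes that the latter is closed under filtered intersections and that each such polytope is a filtered intersection of $n$-generated polytopes containing it in their interior, whereas you assert the domain structure for $\textsf{Con}_n(P_n)$ directly; your parenthetical Carath\'eodory digression and the alternative direct verification are extras not present in the paper.
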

\begin{proof} The discussion in Example~\ref{exam:dom} applies to $K = P_n$ to show that $\textsf{Con}(P_n)$ is a domain, and $\textsf{Con}_n(P_n)$ is closed in $\textsf{Con}(P_n)$ under filtered intersections. Since $\Re^n$ is locally convex, it's  easy to show that each $\conv(F)$ is the intersection of sets $\conv(G)$, where $\conv(F)\subseteq \conv(G)^\circ$ and $|G|\leq n$ if $|F|\leq n$. Hence $(\textsf{Con}_n(P_n),\supseteq)$ is a domain. 

A compact, saturated subset of $[0,\infty)^{op}$ has the form $A = [0,r]$ for some $r$, and then 
$\textsf{cap}^{-1}([0,r]) = \{ \conv(F)\mid \textsf{cap}(\conv(F) \leq r\}$ is closed, and hence compact, since $\textsf{cap}$ is continuous and $\textsf{Con}(P_n)$ --- and hence also $\textsf{Con}_n(P_n)$ --- are compact in the Lawson topology (cf.~\cite{abrjung,comp}). But $\textsf{cap}^{-1}([0,r])$ also is an upper set, so it is Scott compact. Thus \textsf{cap} is proper, and so it measures \textsf{Con}$_n(P_n)$ iff it is strictly monotone. But the latter follows from Proposition~\ref{prop:strictconvex}, since entropy is strictly concave.
\end{proof}
\noindent\textbf{Remark:}
As we will see in a moment, the real import of this last result is not so much that \textsf{cap} measures $\textsf{Con}_n(P_n)$, \emph{per se}, but rather that this implies the mapping \textsf{cap} is strictly monotone. This will tell us that under an appropriate (pre-)order, having one channel strictly below another implies that the capacity of the lower channel is strictly less than that of the larger one. 

We also note that Martin and Panangaden have obtained results in \cite{panangad} that can be used to derive the Proposition~\ref{prop:strictrmono} and Corollary~\ref{prop:strictconv}.

\section{A domain-like structure of $\ST(n)$}
We have seen that $\ST(n)$ is a compact affine monoid, and we already commented that every compact semigroup has a unique smallest ideal: i.e., a non-empty subset $I\subseteq S$ satisfying $IS \cup SI \subseteq I$. This \emph{minimal ideal} is denoted $\M(S)$, and it is closed, hence compact. For example, we noted that $\M(\DT(n))$ is a point, which is the equidistribution on $\overline{n}$. A reference for much of the material in this section is \cite{hofmis}, where basic results about compact affine monoids are laid out. A good reference for results about transformation semigroups can be found in \cite{cliff}.
 
\begin{proposition}\label{prop:affinemonoid}
If $C\in \ST(n)$, then $C(\disc{\overline{n}}) = \conv(\{\delta_i\mid 1\in\overline{n}\})$ is a convex polytope in $[0,1]^n$.
\end{proposition}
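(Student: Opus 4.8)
The plan is to read $C(\disc{\overline{n}})$ as the image of the simplex under the Kleisli extension of $C$, and then exploit that this extension is affine and that $\disc{\overline{n}}$ is a finite-dimensional simplex. Recall from the discussion of $\discr_S$ and the Example on $\ST(m,n)$ that a matrix $C\in\ST(n)$ is exactly a channel $C\colon\overline{n}\to\disc{\overline{n}}$, and that under the monad structure it corresponds to a unique continuous affine map $\widehat{C}\colon\disc{\overline{n}}\to\disc{\overline{n}}$, namely $p\mapsto pC$; this $\widehat{C}$ is what is meant by $C(\disc{\overline{n}})$. Recall also that $\disc{\overline{n}}=P_n$ is the standard $(n-1)$-simplex, whose extreme points are precisely the Dirac measures $\delta_0,\ldots,\delta_{n-1}$, so that every $p\in\disc{\overline{n}}$ is a genuine \emph{finite} convex combination $p=\sum_{i} p_i\delta_i$ with $(p_0,\ldots,p_{n-1})\in P_n$.

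First I would apply affinity to such a $p$: $\widehat{C}(p)=\widehat{C}\bigl(\sum_i p_i\delta_i\bigr)=\sum_i p_i\,\widehat{C}(\delta_i)$. By the Kleisli correspondence, $\widehat{C}(\delta_i)=C(i)$, the $i$-th row of the matrix $C$, which is a probability distribution on $\overline{n}$ and hence a point of $P_n\subseteq[0,1]^n$. Consequently
$$\widehat{C}(\disc{\overline{n}})=\Bigl\{\textstyle\sum_{i} p_i\,C(i)\ \Big|\ (p_0,\ldots,p_{n-1})\in P_n\Bigr\}=\conv\bigl(\{\,C(0),\ldots,C(n-1)\,\}\bigr),$$
so the image is the convex hull of at most $n$ points of $[0,1]^n$, i.e.\ a convex polytope. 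Closedness is automatic (continuous image of the compact set $\disc{\overline{n}}$, or simply finite-dimensionality of a finitely generated hull), and since it has at most $n$ generators it in fact lies in $\textsf{Con}_n(P_n)$, which is the point of recording the statement before the next section.

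I do not expect a genuine obstacle here: the only care needed is in unwinding the notational identification of $\widehat{C}(\delta_i)$ with the $i$-th row of $C$, and in noting that the Krein--Milman/density machinery used for general compact $X$ is not required, because $\disc{\overline{n}}$ has only finitely many extreme points and every point of it is already a finite convex combination of them. To the extent there is a ``hard part,'' it is merely being explicit that ``polytope'' means finitely-generated convex hull, and keeping track of the bound $k\le n$ on the number of generators so the image is seen to sit inside $\textsf{Con}_n(P_n)$.
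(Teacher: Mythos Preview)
Your proposal is correct and follows essentially the same line as the paper's proof: both argue that the Kleisli extension of $C$ is an affine self-map of $\disc{\overline{n}}$, hence sends the simplex $\conv(\{\delta_i\})$ to $\conv(\{\delta_iC\})=\conv(\{C(i)\})$, which is a finitely generated convex polytope. Your version is simply more explicit about the identification $\widehat{C}(\delta_i)=C(i)$ and about why no Krein--Milman argument is needed in the finite case, but the underlying idea is identical.
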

\begin{proof} 
If $C\in \ST(n)$, then $C\colon \disc{\overline{n}}\to \disc{\overline{n}}$ is an affine mapping, so it preserves the convex structure of $\disc{\overline{n}}$. It follows that $C(\disc{\overline{n}}) = \conv(\{\delta_iC\mid i\leq n\})$, where $\delta_i\in\disc{\overline{n}}$ is the Dirac measure on $i\in \overline{n}$. Thus, $C(\disc{\overline{n}})$ is a convex polytope in $\disc{\overline{n}}$. 
\end{proof}

As a result of the Proposition, we can define a relation on $\ST(n)$ by 
\begin{equation}\label{eqn:equiv1}
C \equiv C'\ \Leftrightarrow\ C(\disc{\overline{n}}) = C'(\disc{\overline{n}}).
\end{equation} 
This is clearly a closed equivalence relation, and because channels are affine maps. 
\begin{equation}\label{eqn:equiv2}
C \equiv C'\ \Leftrightarrow\ C(\{\delta_i\mid i\in \overline{n}\}) = C'(\{\delta_i\mid i\in \overline{n}\})
\end{equation}
Now, $C(\delta_i) = \delta_iC = C(i)$, the i$^{th}$ row of $C$, so $C \equiv C'$ iff $C$ and $C'$ have the same set of rows vectors. Hence, 
\begin{equation}\label{eqn:equiv3}
C \equiv C'\ \Leftrightarrow\ (\exists \pi\in S(n))\ M_\pi C = C',
\end{equation}
where $M_\pi$ is the stochastic matrix representing the permutation $\pi\in S(n)$. 

We also can obtain an algebraic representation of the relation $\equiv$ using the monoid structure of $\ST(n)$.
 \begin{definition}
 If $X$ is a set, then the \emph{full transformation semigroup $T(X)$ on $X$} is the family of all selfmaps of $X$ under composition. A \emph{transformation semigroup} is a subsemigroup of $T(X)$ for some set $X$.  
 \end{definition}
 
 \noindent\textbf{Notation:} If $S\subseteq T(X)$ is a transformation semigroup, then for $s,s'\in T(X)$ and $x\in X$, the element $ss'\in S$ denotes the function $ss'(x) = s'(s(x))$ --- i.e., we use the ``algebraic notation" for function application, which agrees with our representation matrix multiplication as composition of functions. 
\medbreak

 Here are some simple observations about $T(X)$; the proofs are all straightforward: 
 \begin{enumerate}
 \item $T(X)$ is a monoid whose group of units is the family of bijections of $X$; if $X$ is finite, this is just $S(|X|)$, the group of permutations of $|X|$-many letters. 
 \item Each constant map $f_x\colon X\to X$ by $f_x(y) = x$ is s \emph{left zero} in $T(X)$: if $g\in T(X)$, then $gf_x = f_x$. It follows from general semigroup theory that $\M(T(X)) = \{ f_x\mid x\in X\}$. 
 \item If $S$ is a transformation semigroup on $X$ and $S\cap \M(T(X)) \not=\emptyset$, then $\M(S) = S\cap \M(T(X))$. Thus, if $S$ contains a constant map, then $\M(S)$ consists of constant maps. This follows from the fact that $gf $ is a constant map if either $f$ or $g$ is one, so the constant maps in $S$ form an ideal. 
 \item If $S$ is a transformation semigroup, then for each $s\in S$ and each $x\in X$, $f_x\in S^1s\ \Rightarrow\ x\in s(X)$.\footnote{If $S$ is a semigroup, then $S^1$ denotes the semigroup $S$ with an identity element adjoined.}  Indeed, if $f_x\in Sx\cup \{s\}$, then there is some $s'\in S$ with $f_x = s's$, so $x = s(s'(y))  \in S(X)$. 
 \item Conversely, if $M(S) = \{ f_x\mid x\in X\}$, then $x\in s(X)\ \Rightarrow\ f_x\in S^!s$. This follows since $x\in s(X)$ implies $x = s(y)$, for some $y\in X$; if $s\not=f_x$, then $f_x = f_ys\in Ss$.
 \end{enumerate}
 
 \begin{definition}
 Let $S$ be a monoid. We define the relations $\equiv_M$ and $\leq_M$ on $S$ by:
 \begin{eqnarray}\label{eqn:2}
 s\equiv_M s'\ &\Leftrightarrow&\ s\M(S) = s'\M(S),\ \mathrm{and}\\
 s \leq_M s'\ &\Leftrightarrow&\ s\M(S)\subseteq s'\M(S).\notag
 \end{eqnarray}
 \end{definition}
 It is routine to show that $\equiv_M$ and $\leq_M$ are both (topologically) closed relations on any compact monoid $S$. 
 
 Combining properties 4.\ and 5.\ above on transformation semigroups with equivalences~\ref{eqn:equiv1}---\ref{eqn:equiv3} yields:
 
 \begin{proposition}\label{prop:mequiv} Let $n\geq 1$ and let $C,C'\in\ST(n)$. Then
\begin{equation}\label{eqn:equiv4}
C(\disc{\overline{n}}) = C'(\disc{\overline{n}})\ \Leftrightarrow\ \M(\ST(n))C = \M(\ST(n))C'\ 
 \Leftrightarrow\ (\exists \pi\in S(n))\ M_\pi C = C'.
 \end{equation}
 \end{proposition}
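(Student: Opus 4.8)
The plan is to reduce the middle, purely algebraic, condition to the geometric condition on the image polytopes, and then read off the third (permutation-matrix) condition from the equivalences \ref{eqn:equiv1}--\ref{eqn:equiv3} already in hand. The first step is to realize $\ST(n)$ as a transformation semigroup, not on $\overline{n}$ but on $\disc{\overline{n}}$: each $C\in\ST(n)$ \emph{is} its continuous affine extension $\widehat C\colon\disc{\overline{n}}\to\disc{\overline{n}}$, $\widehat C(\mu)=\mu C$, and with the algebraic convention for composition (so that the matrix product $CD$ becomes $\widehat D\circ\widehat C$ as a selfmap) this is a monoid embedding $\ST(n)\hookrightarrow T(\disc{\overline{n}})$. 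The constant selfmaps of $\disc{\overline{n}}$ lying in $\ST(n)$ are exactly the ``constant channels'' $F_q$ --- those $C$ all of whose rows equal a fixed $q\in\disc{\overline{n}}$ --- and there is one for every $q$, so $\ST(n)\cap\M(T(\disc{\overline{n}}))\neq\emptyset$. Property~(3) of $T(X)$ above then gives $\M(\ST(n))=\ST(n)\cap\M(T(\disc{\overline{n}}))=\{F_q\mid q\in\disc{\overline{n}}\}$. (Equivalently, directly: the constant channels form an ideal, so $\M(\ST(n))$ is contained among them; and for any $q$, picking $F_p\in\M(\ST(n))$ gives that $F_q=F_pF_q$ lies in $\M(\ST(n))$, so every constant channel lies in $\M(\ST(n))$.)

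The second step computes $\M(\ST(n))C$ for $C\in\ST(n)$. Since $F_qC=F_{qC}$ (composing, $\widehat C\circ f_q$ sends every point to $\widehat C(q)=qC$), we obtain $\M(\ST(n))C=\{F_{qC}\mid q\in\disc{\overline{n}}\}=\{F_p\mid p\in C(\disc{\overline{n}})\}$, the image polytope being $C(\disc{\overline{n}})=\conv(\{\delta_iC\mid i\leq n\})$ of Proposition~\ref{prop:affinemonoid}; this is also the content of properties~(4) and~(5) of $T(X)$ above, which identify the constant maps inside $\ST(n)^1C$ with $\{F_p\mid p\in C(\disc{\overline{n}})\}$. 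As $p\mapsto F_p$ is a bijection onto the constant channels, $\M(\ST(n))C=\M(\ST(n))C'$ holds iff $C(\disc{\overline{n}})=C'(\disc{\overline{n}})$, which is the first of the two equivalences asserted; the remaining one, $C(\disc{\overline{n}})=C'(\disc{\overline{n}})\Leftrightarrow(\exists\pi\in S(n))\,M_\pi C=C'$, is exactly \ref{eqn:equiv1}--\ref{eqn:equiv3}.

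I do not expect a genuine obstacle. The only points needing care are the composition-order bookkeeping --- keeping the algebraic and functional notations straight so that one really has $F_qC=F_{qC}$ and $\widehat{CD}=\widehat D\circ\widehat C$ --- and making sure the identification $\M(\ST(n))=\{F_q\mid q\in\disc{\overline{n}}\}$ rests on the \emph{minimality} of the ideal of constant channels, which is what property~(3) supplies, rather than merely on their forming an ideal. The equality $\{\mu C\mid\mu\in\disc{\overline{n}}\}=C(\disc{\overline{n}})$ needed along the way is just Proposition~\ref{prop:affinemonoid}.
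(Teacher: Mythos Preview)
Your proposal is correct and follows essentially the same route as the paper, which proves the proposition in one line by ``combining properties~4.\ and~5.\ above on transformation semigroups with equivalences~\ref{eqn:equiv1}--\ref{eqn:equiv3}.'' You have simply made explicit what that sentence leaves implicit: realizing $\ST(n)\hookrightarrow T(\disc{\overline{n}})$, invoking property~(3) to identify $\M(\ST(n))$ with the constant channels $\{F_q\mid q\in\disc{\overline{n}}\}$, and then using the computation $F_qC=F_{qC}$ (which is exactly what properties~(4) and~(5) encode) to get $\M(\ST(n))C=\{F_p\mid p\in C(\disc{\overline{n}})\}$, whence the first equivalence; the second is then the already-established chain~\ref{eqn:equiv1}--\ref{eqn:equiv3}.
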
 
 Since $\equiv_M = \leq_M\cap\, (\leq_M)^{-1}$, we can form the relation $\leq_M/\!\equiv_M$, which is a closed partial order on $\ST(n)/\!\equiv_M$. 
 
 \begin{theorem}\label{thm:main} 
 Let $n\geq 1$. Then
 \begin{enumerate}
 \item  The relation $\equiv_M$ is a left congruence on $\ST(n)$. 
 \item $(\ST(n)/\!\equiv_M,\leq_M/\!\equiv_M)$ is a compact ordered space, and the 
 quotient map $\pi\colon \ST(n)\to \ST(n)/\!\equiv_M$ is an monotone map.  
 \item As an ordered space, $\ST(n)/\!\equiv_M\ \simeq\ \textsf{Con}_n(P_n)$. 
 \item Thus $\textsf{cap}\colon (\ST(n)/\!\equiv_M,\leq_m/\!\equiv_M)\to [0,\infty)^{op}$ measures $(\ST(n)/\!\equiv_M,\leq_m/\!\equiv_M)$.
 \item If $C\in \ST(n)$, then $\textsf{Cap}(C) = \textsf{cap}(\conv(C(1),\ldots, C(n)))$, where $C(i)$ is the $i^{\mathrm{th}}$ row of $C$.
 \end{enumerate}
 \end{theorem}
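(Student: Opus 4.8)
The plan is to treat the five claims roughly in the order stated, leaning on Propositions~\ref{prop:affinemonoid} and~\ref{prop:mequiv}, on Corollary~\ref{prop:strictconv}, and on the closedness facts already noted for $\equiv_M$ and $\leq_M$.

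\emph{Parts (1) and (2).} Part (1) is immediate: $\equiv_M$ is defined by the equation $s\M(S)=s'\M(S)$, and applying $t\cdot(-)$ to both sides preserves it by associativity, so $\equiv_M$ (indeed $\leq_M$) is a left congruence on any monoid, in particular on $\ST(n)$; read through Proposition~\ref{prop:mequiv} this just says that permuting the rows of a channel commutes with post-composition by another channel. For part (2), $\equiv_M$ is a closed equivalence relation on the compact Hausdorff space $\ST(n)$, so the quotient $\ST(n)/\!\equiv_M$ is again compact Hausdorff; since $\leq_M$ is a closed preorder with $\equiv_M=\leq_M\cap(\leq_M)^{-1}$, it is saturated for $\equiv_M$, so $\leq_M/\!\equiv_M$ is a well-defined, closed partial order and $(\ST(n)/\!\equiv_M,\leq_M/\!\equiv_M)$ is a compact ordered space; monotonicity of $\pi$ is built into the definition of the quotient order.

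\emph{Part (3), the heart of the matter.} Define $\Phi\colon\ST(n)\to\textsf{Con}_n(P_n)$ by $\Phi(C)=C(\disc{\overline{n}})=\conv(\{C(1),\ldots,C(n)\})$; this lands in $\textsf{Con}_n(P_n)$ by Proposition~\ref{prop:affinemonoid}, since the rows of $C$ are distributions and there are at most $n$ of them. The map $\Phi$ is continuous ($C\mapsto\{\text{rows of }C\}$ followed by $\conv$, both continuous for the hyperspace topology, which on $\textsf{Con}(P_n)$ agrees with the Lawson topology of Example~\ref{exam:dom}), and by Proposition~\ref{prop:mequiv} it is constant on $\equiv_M$-classes and separates them, so it factors through an injection $\overline\Phi\colon\ST(n)/\!\equiv_M\to\textsf{Con}_n(P_n)$. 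It is surjective: any $\conv(\{v_1,\ldots,v_k\})$ with $k\le n$ and $v_i\in P_n$ equals $\Phi(C)$ for the matrix $C$ whose rows are $v_1,\ldots,v_k$ with, say, $v_k$ repeated for the remaining $n-k$ rows. A continuous bijection from a compact space onto a Hausdorff space is a homeomorphism, so $\overline\Phi$ is one. Finally, $C\le_M C'$ unwinds, via the transformation-semigroup facts~4 and~5 and the identification of $\M(\ST(n))$ with the constant channels used in Proposition~\ref{prop:mequiv}, to inclusion of the row-polytope of one channel in that of the other, i.e.\ to comparability in the reverse-inclusion order; hence $\overline\Phi$ is an isomorphism of ordered spaces onto $(\textsf{Con}_n(P_n),\supseteq)$. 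This is the step where care is needed, and I expect it to be the main obstacle: one must track precisely the direction of $\le_M$ (which forces one through the computation $\M(\ST(n))=\{\text{constant channels}\}$ and properties~4--5) and verify that the image of $\ST(n)$ is all of $\textsf{Con}_n(P_n)$ — polytopes with at most $n$ vertices — rather than of $\textsf{Con}(P_n)$.

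\emph{Parts (4) and (5).} Part (4) is then immediate: $(\textsf{Con}_n(P_n),\supseteq)$ is the domain of Corollary~\ref{prop:strictconv} and $\textsf{cap}$ measures it, and a measurement is preserved under an isomorphism of ordered spaces, so $\textsf{cap}$ measures $(\ST(n)/\!\equiv_M,\leq_M/\!\equiv_M)$ via $\overline\Phi$. Part (5) is a term-by-term comparison: with $F=\{C(1),\ldots,C(n)\}$ the rows of $C$, the defining supremum of $\textsf{cap}(\conv(F))$ in Corollary~\ref{prop:strictconv} is exactly the right-hand side of Equation~\ref{eqn:1} for $\textsf{Cap}(C)$, once one notes that a repeated generator (coinciding rows) alters neither $H(\sum_x r_x x)$ nor $\sum_x r_x H(x)$, so the supremum over coefficient tuples indexed by the distinct rows equals that over all $n$ rows; this observation also records that $\textsf{cap}(\conv(F))$ does not depend on the chosen generating set, so $\textsf{cap}$ is well defined on $\textsf{Con}_n(P_n)$. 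Modulo part (3), everything here is transport along an isomorphism and a formula match, while parts (1)--(2) are routine congruence and compact-quotient bookkeeping.
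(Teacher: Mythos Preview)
Your proof is correct and follows the same route as the paper's own (very brief) argument: part~(1) is immediate from the definition of $\equiv_M$, part~(2) from closedness and compactness, part~(3) from Proposition~\ref{prop:mequiv}, and parts~(4)--(5) then follow by transport along the resulting isomorphism together with Corollary~\ref{prop:strictconv} and Equation~\ref{eqn:1}. You have simply supplied the details the paper leaves to the reader --- in particular the explicit construction of the map $\Phi$, its surjectivity via row-repetition, and the order-direction bookkeeping --- so there is nothing to compare.
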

 \begin{proof}
 For 1., it is clear from Equation~\ref{eqn:2} that $C \equiv_M C'$ implies $C''C\equiv_M C''C'$. 
 Thus, $\equiv_M$ is a left-congruence. 
 
Because $\equiv_M$ is a closed relation and $\ST(n)$ is compact, the quotient space is compact and the quotient map is closed and continuous. 
The definition of the pre-order $\leq_M$ and the quotient order $\leq_M/\!\equiv_M$ implies the quotient map is monotone.
 
 3.\ follows from Proposition~\ref{prop:mequiv}, from which 4.\ and 5.\ are clear. 
 \end{proof}
 
 So we see that $\ST(n))$ has a natural pre-order $\leq_M$ defined by its algebraic structure as a compact monoid, and if $C\leq_M C'$ then $\textsf{Cap}(C)\leq \textsf{Cap}(C')$. Moreover, this pre-order turns into a partial order on $\ST(n)/\!\equiv_M$, and here capacity is the mapping \textsf{cap}. Importantly, \textsf{cap} --- and hence \textsf{Cap} on $\ST(n)$ --- is strictly monotone with respect to this partial (pre-) order. Moreover, $C \equiv_M C'$ iff $S(n)C = S(n) C'$, so each is a permutation of the rows of the other. 

\begin{example}
Here's an example of what  our results tell us about $\ST(n)$. Recall that a \emph{Z-channel} is a binary channel of the form $$Z_p = \left(\begin{matrix} 1-p &p\\ 0 & 1\end{matrix}\right)= (1-p)\left ( \begin{array}{cc}
	1 & 0 \\
	 0 & 1\end{array}\right ) + 
p \left ( \begin{array}{cc}
	0 & 1 \\
	 0 & 1\end{array}\right ),$$
showing that each lies on a one-parameter semigroup $\psi\colon ([0,1], \cdot)\to \ST(2)$ by $\psi(p) = Z_p$. Now, $\psi$ is a homomorphism, so $p < p'\ \Rightarrow Z_p = Z_{p \over p'} Z_{p'}$, while obviously, $p \not= p'\ \Rightarrow\ S(2) Z_p \cap S(2) Z_{p'} = \emptyset$. It follows that $p < p'\ \Rightarrow\ \pi(Z_p) < \pi(Z_{p'})\ \Rightarrow\ \Capp{Z_p} = \textsf{cap}(\pi(Z_p)) < 
\textsf{cap}(\pi(Z_{p'})) = \Capp{Z_{p'}}$, so the Z-channels $\{Z_p\mid 0\leq p \leq 1\}$ all have distinct capacities. 

Similarly, the matrices 
$$Z'_p = \left(\begin{matrix} 1 & 0\\ 1-p & p\end{matrix}\right)
= p\left ( \begin{array}{cc}
	1 & 0 \\
	 0 & 1\end{array}\right ) + 
(1-p) \left ( \begin{array}{cc}
	1 & 0 \\
	 1 & 0\end{array}\right ),$$
form a one parameter semigroup from $I_2$ to $\M(\ST(2))$, along which \textsf{Cap} is strictly decreasing. Now $\M(\ST(2)) = \{r\cdot O_1 + (1-r) O_2\mid 0\leq r \leq 1\}$, where $O_i$ is the matrix both of whose rows are $(\delta_{1i}\ \delta_{2i})$, for $i = 1,2$.  For each fixed $r\in [0,1]$, there is a one-parameter semigroup $p\mapsto p\cdot I_2 + (1-p)\cdot (r\cdot Z_p + (1-r)\cdot Z'_p)$, and combining the earlier results, we conclude that along this one-parameter semigroup, \textsf{Cap} is strictly decreasing. Note as well that $\conv(\{I_2\}\cup \M(ST(2)))$ is equal to the union of these one-parameter semigroups. 

We can generalize this \emph{verbatim} to $\ST(n)$: define a Z-channel in $\ST(n)$ to be one of the form $Z_p  = p\cdot I_n + (1-p)\cdot O_k$, where $p\in [0,1]$ and $O_k$ is the channel in $\M(\ST(n))$ all of whose rows are $(0,\ldots, 0, 1,0 ,\ldots)$, where the unique $1$ appears in the $k^{th}$ entry. As in the binary case, $p \leq p'\ \Rightarrow Z_p = Z_{p\over p'}Z_{p'}$, and $p\not= p'\ \Rightarrow\ S(n) Z_p\cap S(n) Z_{p'} = \emptyset$. So we can again conclude that \textsf{Cap} is strictly decreasing along this one-parameter semigroup. As in the case of $n=2$, $\M(\ST(n)) = \{ \sum_{i\leq n} r_k O_k\mid \sum_k r_k = 1\}$, so the result extends \emph{verbatim} to these one-parameter semigroups. 
\end{example}
 \section{Summary and future work}
We have used an array of tools to analyze the capacity map on the set of classical channels. In Section~\ref{sec:top}, we gave a topological interpretation of capacity of a channel: it is the maximum distance between the surface generated by the entropy function applied to the rows of the channel, and the polytope generated by the entropy function applied to each individual row. This suggests an method for computing capacity: the Generalized Mean Value Theorem implies the  capacity is achieved at the unique place where gradient of the capacity function is $0$ --- this point is unique because entropy is strongly concave. Moreover, this produces the input distribution where capacity is achieved --- the celebrated Arimoto--Blahut Algorithm~\cite{arim,blah} commonly used to compute the capacity of a discrete memoryless channel is an iterative procedure that approximates the capacity, not the input distribution where its value is assumed. An algorithm  more closely related to our results can be found in \cite{wata}, where the iteration scheme follows the concavity of the capacity function using Newton's Method. 

We also applied our topological result to derive a domain-theoretic interpretation of capacity, again using the strong concavity of entropy: the family $Con_n(P_n)$ of polytopes with at most $n$ vertices is a domain, and capacity measures this domain. The important point is that this implies capacity is strictly monotone with respect to the partial order. We found that $\ST(n)$ has a natural, algebraically-defined pre-order whose associated equivalence $\equiv_M$ defines a closed congruence on $\ST(n)$, and modulo which we obtain a copy of $\textsf{Conv}_n(P_n)$. This implies that capacity is strictly monotone with respect to the pre-order on $\ST(n)$. 

In addition, we used the probabilistic measures on compact spaces to define three monads, each of which tells us something about classical channels. The first realizes $\ST(m,n)$ as the morphisms on the Kleisli category of the monad \discr$_S$. The second shows that $\ST(n)$ is the free compact, affine monoid $\discr_M{[\overline{n}\to\overline{n}]}$, while the third shows that $\DT(n)$ is the free compact monoid over $S(n)$, the group of permutations on $n$ letters. This last also implies $\DT(n)$ has a zero, which is Haar measure on $S(n)$. 

The work discussed here concerns classical channels, but we believe that much of it could be generalized to the quantum setting. We pointed out one connection to existing work on the roll of free affine monoids in analyzing quantum qubit channels. In any case, we have shown that the basic results of \cite{martin2} generalize from the binary case, where the capacity function of a binary channel was studied in terms of the subinterval of $[0,1]$ it determines. The analog here is the polytope the rows of an $n\times n$-stochastic channel determine. 

\section*{Acknowledgment}
Many of the results in this paper are based on results in the paper \cite{hofmis}, and I want to express my thanks to my longtime colleague, \textsc{Karl H.\ Hofmann} for many enlightening and stimulating conversations about topics related to this work. 

\bibliographystyle{eptcs}

\end{document}